\documentclass[5p]{elsarticle}

\usepackage[intlimits]{amsmath}
\usepackage{amsthm,amssymb,amsfonts,times,mathptmx}
\usepackage{mathrsfs}
\usepackage[english]{babel}
\usepackage{xcolor}
\usepackage{tikz}
\usepackage{pgfplots}
\usepackage{placeins, booktabs, subcaption}

\pgfplotsset{compat=1.3}

\usepackage{graphics} 
\usepackage{epsfig}    

\theoremstyle{plain}

\newtheorem{lemma}{Lemma}
\newtheorem{theorem}{Theorem}

\theoremstyle{remark}

\biboptions{sort&compress}
\hyphenation{back - stepping}

\journal{Systems \& Control Letters}

\begin{document}
	
    \begin{frontmatter}
	\title{Stabilization of Linear Switched Systems with Long {Constant} Input Delay via Average or Averaging Predictor Feedbacks\tnoteref{t1}}
	\tnotetext[t1]{Funded by the European Union (ERC, C-NORA, 101088147). Views and opinions expressed are however those of the authors only and do not necessarily reflect those of the European Union or the European Research Council Executive Agency. Neither the European Union nor the granting authority can be held responsible for them.}
		
	\author[1]{Andreas Katsanikakis}
	\author[1]{Nikolaos Bekiaris-Liberis}
		
	\affiliation[1]{organization={Department of Electrical and Computer Engineering, Technical University of              Crete},
			addressline={University Campus, Akrotiri}, 
			city={Chania},
			postcode={73100}, 
			country={Greece}}
		
        \begin{abstract}
            We develop delay-compensating feedback laws for linear switched systems with time-dependent switching. Because the future values of the switching signal, which are needed for constructing an exact predictor-feedback law, may be unavailable at current time, the key design challenge is how to construct a proper predictor state. We resolve this challenge constructing two alternative, average predictor-based feedback laws. The first is viewed as a predictor-feedback law for a particular average system, properly modified to provide exact state predictions over a horizon that depends on a minimum dwell time of the switching signal (when it is available). The second is, essentially, a modification of an average of predictor feedbacks, each one corresponding to the fixed-mode predictor-feedback law. We establish that under the control laws introduced, the closed-loop systems are (uniformly) exponentially stable, provided that the differences among system’s matrices and among (nominal stabilizing) controller’s gains are sufficiently small, with a size that is inversely proportional to the delay length. Since no restriction is imposed on the delay, such a limitation is inherent to the problem considered (in which the future switching signal values are unavailable), and thus, it cannot be removed. The stability proof relies on multiple Lyapunov functionals constructed via backstepping and derivation of solutions’ estimates for quantifying the difference between average and exact predictor states. We present consistent numerical simulation results, which illustrate the necessity of employing the average predictor-based laws and demonstrate the performance improvement when the knowledge of a minimum dwell time is properly utilized for improving state prediction accuracy.
	\end{abstract}
		
	\begin{keyword}
			Predictor Feedback \sep Backstepping \sep Switched Systems					
	\end{keyword}
		
    \end{frontmatter}

\section{Introduction}

\subsection{Motivation for Addressing Switched Systems with Input Delay}
Switched systems appear in numerous applications including traffic flow control \cite{traffic_flow}, automotive systems control \cite{IOANNOY}, \cite{ACC}, networked control systems \cite{telecom}, water networks \cite{Water}, power systems \cite{Luo2008}, \cite{PowerGrid}, and even epidemic spreading \cite{SIR_D}. Their dynamics may also be affected by the presence of input delays {that are large in magnitude,} which may degrade system's performance {or even render the system unstable when left uncompensated\footnote{The term ``long delay'' here refers to an input delay that may be so large that delay-free feedback laws (in particular, without a predictor structure), designed for the case in which there is no delay, can no longer guarantee satisfactory closed-loop performance or stability.}}. For example, in vehicle dynamics, switchings may occur due to changes between throttle/braking modes and input delays due to, e.g., engine response {and other vehicle's dynamics} \cite{IOANNOY}, \cite{ACC}, \cite{AICC}, \cite{Obs_P}, \cite{ISS_P}. {In particular, such large actuation delays in vehicular platoons with automated/connected
vehicles may lead to string instability (e.g., stop-and-go waves) when they are left uncompensated, see, e.g., \cite{DynamicsCV}, \cite{StringStab}.} In fault diagnosis in water distribution systems \cite{Water}, switchings may appear in the modeling of actuator dynamics, (e.g., due to transition of a pump between off/on states), while input delays may appear due to water transport times. {In power systems, inertia variations may be modeled as switchings \cite{PowerGrid}, while input delays may arise due to, for example, communication, where such communication/transmission delays may be much larger than the controller's sampling time \cite{FreqControlPowerSys}, \cite{VoltageAsync}. }Consequently, it is essential to design control strategies that can handle simultaneously switching and long input delay. 

\subsection{Literature Related to Linear Switched Systems with Input Delay}
Related literature includes results on stability analysis of switched systems with input delay utilizing Linear Matrix Inequalities (LMIs), \cite{FreqControlPowerSys}, \cite{Lin_discrete}, \cite{Xi_Corr}, \cite{Kp}, \cite{Switch_Communication}, \cite{B}, \cite{AHMED}, \cite{LIMA} and Lyapunov Krasovskii functionals, \cite{Yue_Kras}, \cite{Mazenc_LKF}, which, usually, impose a limitation on the delay length. Alternative methods that do not rely on an infinite-dimensional predictor-based controller, such as the Lyapunov based designs in \cite{Mazenc1}, \cite{D} or the truncated predictor methods in \cite{Wu_Truncated}, \cite{Sakthi_Truncated}, \cite{C}, also typically restrict the permissible size of the input delay or the system dynamics. For handling large input delays (for general systems) a predictor-based approach is required. However, the predictor-based approach in \cite{Lin_LTI} relies explicitly on the availability of future values of the switching signal, which may not be realistic in practical scenarios. Our results could also be viewed as related to works dealing with switched, stochastic or deterministic, hyperbolic PDE systems; see, for example, \cite{traffic_flow}, \cite{Auriol}, \cite{Prieur}. Although they address different problems, works dealing with asynchronous switchings, such as, e.g., \cite{A1}, \cite{ADT_asynchronous}, which may be viewed as employing delayed information of the switching signal in the control law, as well as with systems in which the switching signal is the manipulated variable that is subject to delay, such as, e.g., \cite{SwitchAffine}, \cite{E}, can be also viewed as relevant. Perhaps the most closely related existing results are the predictor-based techniques for linear systems (without switchings) with switched delays (stochastic or deterministic; see also, e.g., \cite{F}, \cite{G} that address switched delays), developed in \cite{Kong}, \cite{Kong_P1}, and \cite{A}. In fact, although the case of non-switched systems is considered in \cite{Kong}, \cite{Kong_P1}, the idea of constructing average predictor-based control laws for systems with stochastic (switched) input delays, introduced and analyzed utilizing the backstepping method in \cite{Kong}, has inspired our developments. 

\subsection{Contributions to Input Delay Compensation for Switched Systems}
In this paper, a switched linear system is considered with a long delay in the input. For the switching signal we consider that it is time dependent, featuring a minimum/maximum dwell time\footnote{However, our design approach can handle, as special case, the case of arbitrary/unknown switching; see \cite{my_ECC}, \cite{my_TDS}.}. Since an exact predictor-feedback law requires employment of the future switching signal values over a horizon equal to the delay, which may not be available (at current time), such an exact predictor-feedback law is inapplicable. For this reason, in this work, we propose two different delay-compensating predictor-based control laws as follows.

(i) In the first approach, we develop an average predictor-feedback control law. The key ideas of the design are construction of a predictor state on the basis of a properly chosen expected (average) system and proper utilization of the knowledge of a minimum dwell time of the switching signal (if available). In particular, in the case in which the switching signal's minimum dwell time is known, the controller can compute exact predictions of the state over a horizon that depends on the minimum dwell time and the last (before the current time) switching instant. (In the case of an unknown switching signal there may be no intervals of exact predictions.)

(ii) In the second approach, we introduce an alternative, predictor-based design that is, essentially, an average of exact predictor-feedback laws corresponding to the exact predictor feedbacks for each subsystem. Similarly to (i), when a minimum dwell time is available, the controller features intervals of exact prediction of the state, thus providing a more accurate prediction as compared with the case in which no switching signal information is available.

Since our feedback laws are not exact predictor-feedback laws, to establish (uniform) stability of the closed-loop systems we derive, in a constructive manner, estimates of the errors between our control laws and the exact predictor-feedback law. As the delay value may not be restricted, we have to necessarily impose a restriction on the magnitude of the differences among system’s matrices and among controller’s gains. This is an inherent limitation, in the practically realistic and theoretically challenging case in which the future values of the switching signal are not available (at current time), for guaranteeing that a sufficiently accurate (for stabilization) construction of the future values of the state is possible. We show however that there is a trade-off, between the allowable differences among system’s matrices and among controller’s gains, and the value of the delay. The stability analyses rely on introduction of suitable backstepping transformations, which enable construction of multiple Lyapunov functionals. In particular, we derive a lower bound on the allowable dwell time (that may depend proportionally to the delay value), which guarantees closed-loop stability. We provide a numerical simulation example that illustrates the effectiveness of both proposed controllers, including comparisons between the cases where dwell time information may or may not be available.

\subsection{Relation to Conference Papers \cite{my_ECC}, \cite{my_TDS}}
This work extends our results presented in conference papers \cite{my_ECC} and \cite{my_TDS} as follows. {Two different predictor-based feedback laws are developed via construction of an average \\ predictor-based law and via averaging exact predictor-feedback laws employing the dwell time knowledge, while \cite{my_ECC} and \cite{my_TDS} treat only the case when no dwell time information is available.} Thus, the new, average predictor-based control laws developed here may result in more accurate state prediction by properly incorporating minimum dwell time knowledge in the design; while they include as special case the designs from \cite{my_ECC}, \cite{my_TDS} when the switching signal is unknown and potentially arbitrary. {Accordingly, here we introduce a different stability analysis strategy in which we construct novel, multiple Lyapunov functionals (instead of a common Lyapunov functional as in \cite{my_ECC}, \cite{my_TDS} which imposes further restrictions on the system's parameters), while we introduce different backstepping transformations and derive improved bounds/estimates.} 

Furthermore, we present new numerical simulation results, implementing the new designs introduced here, which we also compare with the designs in \cite{my_ECC}, \cite{my_TDS}, illustrating that the former may considerably improve closed-loop performance. {In particular, we study in simulation the improvement in closed-loop performance with respect to the designs from \cite{my_ECC}, \cite{my_TDS} both quantitatively, via a certain performance index, and qualitatively, including performance comparison with respect to an ideal (though inapplicable), exact predictor-feedback law. We also quantify the robustness properties of our designs to small delay mismatch.}

\subsection{Organization of the Results Presented}
The outline of the paper is as follows. Section~\ref{sec2} presents the class of switched systems with input delay examined and defines the two different predictor-based control designs. In Section~\ref{sec3}, we state and prove our two main results, which are uniform exponential stability of the closed-loop systems under the proposed control laws. In Section~\ref{sec4} we provide consistent simulation results, including comparisons depending on the available information about the switching signal. Finally, in Section~\ref{sec5} we provide concluding remarks and discuss potential topics of future research.

\section{Problem Formulation and Control Design} \label{sec2}
\subsection{Switched Linear Systems with Input Delay}
We consider a linear switched system subject to a constant delay in the control input, described by the following dynamics
\begin{equation}\label{system}
\dot{X}(t) = A_{\sigma(t)} X(t) + B_{\sigma(t)} U(t-D), 
\end{equation}
where $X \in \mathbb{R}^q$ is the system state vector, $U \in \mathbb{R}$ is the control input, {with initial condition $U_0(s) = U(s)$, $s \in [-D,0]$, which is assumed to belong to $L^2([ -D,0]; \mathbb{R})$}, and $D>0$ is an arbitrarily long input delay. {Furthermore, the switching signal $\sigma:[0,+\infty) \rightarrow \mathcal{P}$, where $\mathcal{P} = \{ 1,2,\ldots, p \}$ is a finite set with $p$ being the number of different system modes, is a right-continuous, piecewise-constant function, assumed, according to the standard definition in \cite{Liberzon}, to have only finitely many discontinuities on any finite interval, thereby excluding Zeno behavior and ensuring well-posedness of Carathéodory solutions}\footnote{{
This follows using, e.g., \cite[Thm.~5.3]{HALE} in combination with \cite[Lem. 3]{Kong_P1}, from the linearity of the system dynamics (at each mode) and the assumption that $U_0\in L^2([-D,0];\mathbb{R})$.
}}. Denoting the strictly increasing sequence of switching times by $ \{0= t_0 < t_1 < t_2 < \ldots \}$, each interval $[t_k, t_{k+1})$ is a constant-mode interval. We consider switching signals that satisfy 
\begin{equation}\label{dwell_definition}
    \bar{\tau}_d \geq t_{k+1} - t_k \geq \tau_d, \quad \forall k \in \mathbb{Z}_{\geq0},
\end{equation} with $\bar{\tau}_d > 0$ and $ \tau_d > 0$ representing the maximum and the minimum time length between consecutive switchings, respectively. 

{We note that here we address the case in which the input delay $D$ is larger than the minimum dwell time $\tau_d$, i.e., $D > \tau_d$, as this setting represents the most challenging scenario because the system may switch to another mode before the delayed control action $U(t - D)$ takes effect at time $t$. Consequently, the future active mode over the prediction sub-horizon $[\tau_0(t)+\tau_d,\, t + D]$ (see Figure~\ref{fig1}) is unknown (in contrast to a case in which $D\leq \tau_d$), which prevents construction of an exact predictor-feedback law (for some $t$) and necessitates the construction of an average predictor-based design, as proposed here.  The results presented here are however still valid if the delay satisfies $D \geq \bar{\tau}_d>\tau_d$ as there is no restriction imposed on its size.}
In summary, the main challenge in our setup is to design a controller at time $t$ to predict the future state at $t+D$ employing only the information of the available current (or past), up to time $t$, switching signal values and the minimum dwell time.

\subsection{Predictor-Based Design via Average Predictor}

We propose the predictor-based controller for system (\ref{system}) as

\begin{equation}\label{U1}
    U(t)=U_1(t),
\end{equation}
with
\begin{align} \label{U1_exp}
 U_1(t) &=  \bar{K} \left( e^{\bar{A}\bigl(D-\tau(t)\bigr)} X\bigl(t+\tau(t)\bigr) \notag \right. \\ 
        & \, \left. + \int_{t+\tau(t)-D}^{t} {e^{\bar{A}(t-\theta)}\bar{B}U(\theta) \, d\theta} \right),
\end{align}
where
\begin{align}
\tau(t) &= \max \{0,  \tau_0(t) + \tau_d -t\} , \quad t \geq 0, \label{tr} \\
\tau_0(t) &= \max \{ t_k, \ k \in \mathbb{Z}_{\geq 0} \, | \, t_k \leq t   \}, \quad t \geq 0, \label{tr0}
\end{align}
and
\begin{align}\label{dwell_prediction}
    X\bigl(t+\tau(t)\bigr) &= e^{ A_{\sigma(\tau_0(t))} \tau(t)} X(t) \notag \\
                    & \ + \int_{t-D}^{t+\tau(t)-D} {e^{A_{\sigma(\tau_0(t))}(t+\tau(t)-D-\theta)}B_{\sigma(\tau_0(t))} U(\theta) \, d\theta}.
\end{align}
In particular, equation (\ref{dwell_prediction}) is the exact prediction of the state until time $t+\tau(t)$. Hence, the controller capitalizes on the availability of the minimum dwell time and the knowledge of the last (before the current time $t$) switching instant to construct a more accurate (in terms of being closer to the inapplicable exact predictor state) predictor state, as illustrated in Figure~\ref{fig1}.  The parameter $\tau(t)$ describes how far in the future the mode of the system remains known, and thus, the future state $X\bigl(t+\tau(t)\bigr)$ is precisely computed, as depicted in Figure~\ref{fig3}. {We note here that the less challenging case $D \leq \tau_d$ can be as well covered within our framework by slightly modifying (\ref{tr}) to $\tau(t) = \max \{0, \min \{\tau_0(t) + \tau_d -t,D\} \}, \,t \geq 0$ (for presentation and notational simplicity we do not belabor this case).} Notice that when $\tau(t) > 0$ the mode is known within the interval $[t, t+\tau(t)]$ due to the minimum dwell time assumption, allowing a precise short-term prediction. Beyond that interval, the uncertainty about the future mode, where the system will operate, necessitates the use of some expected matrices ($\bar{A}, \bar{B}$) chosen to approximate all modes in some (average) sense. For example, the matrices $\bar{A}$ and $\bar{B}$ can be viewed as expected values of the sets $\{A_1, A_2, \dots, A_p\}$ and $\{B_1, B_2, \dots, B_p\}$, respectively, and are chosen by the designer. The selection of $\bar{A}$ and $\bar{B}$ is critical, as stability depends on the norms $|A_i - \bar{A}|$ and $|B_i - \bar{B}|$, $i=1,\ldots,p$. 
 \begin{figure}[ht!]
    \centering
    \includegraphics[width=8 cm]{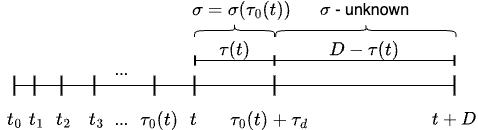}
    \caption{Switching instants and respective modes in the $[t, t+D]$ time interval.
    \label{fig1}}
\end{figure}
\begin{figure}[ht!]
    \centering
    \includegraphics[width=8 cm]{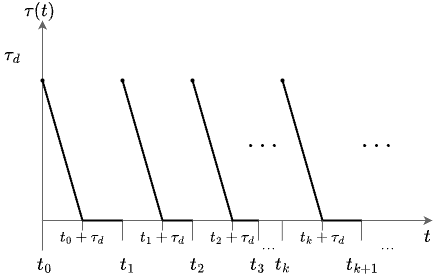}
    \caption{ Behavior of function $\tau(t)$ across several switching instants. Immediately after each switching $\tau(t)$ resets to $\tau_d$. It then decays linearly to zero until the next switching occurs (or remains at zero if no further switching occurs). 
    \label{fig3}}
\end{figure}

 {We now introduce an optimization routine to compute the expected matrices $(\bar{A}, \bar{B}, \bar{K})$ such that they minimize the deviations from the respective subsystems parameters (a procedure not included in the conference version \cite{my_TDS} that contains the respective preliminary control design).} Namely, one can solve the following optimization problem 
\begin{equation}\label{opt_routine}
    \min \, R \quad \text{subject to} \quad \left|\bar{Y} - Y_i\right| \leq R, \quad \forall i \in \mathcal{P},
\end{equation}
for a set of matrices $Y = \{Y_1, Y_2,\ldots,Y_p$\} of the same size, where \(| \cdot |\) denotes any desired, induced matrix norm. Since no further knowledge of the switching signal is assumed, which could help in computing a more accurate predictor state at time $t$, a computationally efficient and simple choice for $\bar{A}$ and $\bar{B}$ is to take them as the element-wise mean. Similarly, $\bar{K}$ represents a feedback gain determined by the user. In a similar manner one could solve the following minimization problem
\begin{equation}\label{opt_routine2}
    \min \, R \quad \text{subject to} \quad \left|\bar{K} - K_i\right| \leq R, \quad \forall i \in \mathcal{P},
\end{equation}
where each $K_i$ is designed such that $A_i + B_i K_i$ is Hurwitz for all $i = 1, \dots, p$. This choice is aligned with the choice of matrices $\bar{A}$, $\bar{B}$ used in the predictor-based control law. 

In fact, since we consider $D$ to be potentially long, there is no clear way to choose in the predictor-based control law (\ref{U1_exp}) a control gain that would potentially depend on the switching signal, as knowledge of $\sigma(t+D)$ is not available at time $t$ (because ${\tau_d} < D$). Therefore, due to the presence of input delay, even when (\ref{U1_exp}) corresponds to an exact predictor state, the choice of control gain may result, for mode $i$, in a nominal, closed-loop system in which matrix $A_i+B_iK_j$ may not be Hurwitz, as there is a mismatch between $K_j$ and a gain $K_i$ that makes $A_i+B_iK_i$ Hurwitz, where $i \neq j$, $i,j \in \mathcal{P}$. A similar problem is also reported in, e.g., \cite{Kp}, where the authors prove stability of the switched system, under certain conditions, which involve a restriction on the upper bound of delay value. In the presence of long input delay such problems arise typically and they originate in lack of synchronization between the active mode of the system and the gain applied see, e.g., \cite{A1}. Thus, we choose a control gain $\bar{K}$ that is independent of the switching rule. 

{Given the definition in equations (\ref{tr}), (\ref{tr0}), the case $\tau(t)=0$ appears when $t_k+\tau_d\leq t \leq t_{k+1}$, which is the interval over which the current mode cannot be guaranteed to remain constant. From that moment until the next switching instant, no future-mode information is available, and hence, $\tau(t)=0$. The case where $\tau(t) = 0, \ \forall t \geq 0$ can be used to describe the limiting case where the controller never utilizes dwell time knowledge, i.e., it operates without any knowledge about the switching signal.} This situation may appear in cases where the controller may have no access to the current or past switching instants, or is unaware of any information that concerns the dwell time. Under such conditions, the controller can still effectively operate without any short-term prediction capability, relying entirely on average approximations. This scenario can thus be interpreted as the switched system operating under arbitrary or unknown switching.
 
\subsection{Predictor-Based Design via Averaging Predictor Feedbacks}
We also design the following alternative controller
\begin{equation}\label{U2}
    U(t)=U_2(t),
\end{equation}
with
\begin{equation} \label{U2_exp}
 U_2(t) = \frac{1}{p} \sum_{i=1}^p K_i\hat{P_i}(t),  
\end{equation}
where $K_i$ is a feedback gain chosen such that $A_i + B_i K_i$ is Hurwitz and $\hat{P_i}(t)$ is the predictor state, for each subsystem, given by
 
\begin{equation}\label{dwell_subprediction}
\hat{P_i}(t) = e^{A_i \bigl(D-\tau(t)\bigr)} X\bigl(t+\tau(t)\bigr) + \int_{t+\tau(t)-D}^{t} {e^{A_i(t-s)}B_i U(\theta) \, d\theta},
\end{equation}
 for $X\bigl(t+\tau(t)\bigr)$ defined in (\ref{dwell_prediction}). Controller design (\ref{U2_exp}), (\ref{dwell_subprediction}) is based on calculating the exact predictor for each subsystem $i$ and then averaging over all these predictors (properly modified to account for the cases in which $\tau(t)>0$ for some times $t$). The motivation for such a design comes from the fact that it provides a simple control law that incorporates a delay-compensating mechanism, relying on averaging the exact predictors in the case in which the system would operate always in a single mode. However, as (\ref{U2}) does not correspond to an exact predictor, similarly as in the case of (\ref{U1}), one has to impose certain limitations on the system's matrices and controller's gains. On the other hand, the predictor-based control law (\ref{U1}) relies on constructing a predictor state for an average system (with matrices $\bar{A}$, $\bar{B}$) with input delay, while control law (\ref{U2}) relies on averaging the predictor states for each subsystem $i$. We further compare performance of the two alternative designs in simulation in Section \ref{sec4}.


\section{Stability Analysis}\label{sec3}  
\subsection{Stability Under (\ref{U1})}
\begin{theorem}\label{main_theorem}
Consider the closed-loop system consisting of plant (\ref{system}) and controller (\ref{U1}) under the assumption that the pairs $(A_i, \, B_i)$, for $i=1, \ldots,p$, are stabilizable. Let $K_i$ be such that $A_i+B_iK_i$, $i=1,...,p$, are Hurwitz, and thus, there exist some ${S_i= S_i^T} > 0 $, $ Q_i = Q_i^T > 0 $, satisfying 
    \begin{equation}\label{mean_delayf_stabilityy}
        \left({A_i} + {B_i} {K_i}\right)^T {S_i} +{S_i}\left({A_i} + {B_i} {K_i}\right) = -Q_i.
    \end{equation} There exist $\tau_d ^\star >0$, $\epsilon^\star > 0$, such that for any $\tau_d >\tau_d ^\star$, $\epsilon < \epsilon^\star$, where
    \begin{equation}\label{eps} 
        \epsilon= \max \limits_{i=1, \ldots, p} \{ |A_i-\bar{A}|, |B_i-\bar{B}|, |K_i-\bar{K}|\},
    \end{equation}
and for all $X_0\in\mathbb{R}^n$, $U_0\in L^2[-D,0]$, the closed-loop system is uniformly exponentially stable, in the sense that there exist  positive constants $\rho$, $\xi$, such that
    \begin{align}\label{tra}
        \left| X(t) \right| + & \sqrt{\int_{t-D}^{t} U(\theta)^2 d\theta} \leq \rho \left( |X(0)| + \sqrt{\int_{-D}^{0} U(\theta)^2 d\theta} \right) e^{-\xi t }, \notag \\
        & \, t \geq 0.
    \end{align} 
\end{theorem}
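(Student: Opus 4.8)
The plan is to recast the input delay as a transport PDE and to treat (\ref{U1_exp}) as an \emph{inexact} predictor feedback for the average system with data $(\bar A,\bar B,\bar K)$, so that switching, the mode/average mismatch, and the unavailability of future switching values all get pushed into forcing terms controllable by $\epsilon$. First I would introduce the actuator state $u(x,t)=U(t+x-D)$, $x\in[0,D]$, which obeys $u_t=u_x$ with $u(D,t)=U_1(t)$ and $u(0,t)=U(t-D)$, so that (\ref{system}) reads $\dot X=A_{\sigma(t)}X+B_{\sigma(t)}u(0,t)$. Since $\bar A,\bar B,\bar K$ are \emph{mode-independent}, I would use the average backstepping transformation
\begin{equation*}
w(x,t)=u(x,t)-\bar K\Bigl(e^{\bar A x}X(t)+\int_{0}^{x}e^{\bar A(x-y)}\bar B\,u(y,t)\,dy\Bigr),
\end{equation*}
whose direct and inverse maps are bounded on $L^2$. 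A direct computation gives $u(0,t)=w(0,t)+\bar K X(t)$ exactly, while $w(D,t)=U_1(t)-\bar K\bigl(e^{\bar A D}X(t)+\int_{t-D}^{t}e^{\bar A(t-\theta)}\bar B U(\theta)\,d\theta\bigr)$ reduces, by the semigroup identity applied to (\ref{dwell_prediction}), to $\bar K\,e^{\bar A(D-\tau(t))}$ times the difference between propagating $X(t)$ over the known-mode window $[t,t+\tau(t)]$ with the true matrices $A_{\sigma(\tau_0(t))},B_{\sigma(\tau_0(t))}$ versus with $\bar A,\bar B$; this discrepancy is $O(\epsilon)$.

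Next I would read off the target system. Using $u(0,t)=w(0,t)+\bar K X(t)$, on the constant-mode interval $[t_k,t_{k+1})$ with $i=\sigma(t_k)$ the plant becomes
\begin{equation*}
\dot X=(A_i+B_iK_i)X+B_i(\bar K-K_i)X+B_i\,w(0,t),
\end{equation*}
which exposes the Hurwitz matrix $A_i+B_iK_i$ for which (\ref{mean_delayf_stabilityy}) supplies $P_i,Q_i$, the term $B_i(\bar K-K_i)X$ being $O(\epsilon)$ by (\ref{eps}). Differentiating the transformation and using $u_t=u_x$ yields $w_t=w_x+g(x,t)$ with $g(x,t)=-\bar K e^{\bar A x}\bigl[(A_i-\bar A)X+(B_i-\bar B)u(0,t)\bigr]$, so that $|g(x,t)|\le \epsilon\,|\bar K|\,e^{|\bar A|D}\bigl(|X|+|w(0,t)|\bigr)$, together with the boundary value $w(D,t)$ estimated above.

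The step I expect to be the main obstacle is making the two $O(\epsilon)$ bounds quantitative with their genuine $D$-dependence and then showing the Lyapunov inequality actually closes. Both $g$ and $w(D,t)$ carry a factor $e^{|\bar A|D}$ (from $e^{\bar A x}$ with $x\le D$, and from $e^{\bar A(D-\tau(t))}$ acting on the propagation error over $[t,t+\tau(t)]$, which I would estimate by a variation-of-constants/Gronwall argument on the single active mode of that window, using $\tau(t)\le\tau_d$). This is precisely the mechanism that forces $\epsilon^*$ to shrink as $D$ grows, and it is where the hypothesis $\epsilon<\epsilon^*$ is consumed; larger dwell time enters only benignly, since a larger $\tau(t)\le\tau_d$ merely shrinks the horizon $D-\tau(t)$ of inexact prediction and hence the boundary discrepancy. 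Conceptually these are the errors between $U_1$ and the exact predictor feedback, but the analysis stays self-contained within the average-system target.

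Finally I would close the loop with multiple Lyapunov functionals. On $[t_k,t_{k+1})$ I take $V(t)=X^T P_i X+a\!\int_0^D(1+x)\,w(x,t)^2\,dx$, with $a>0$ fixed so that the negative boundary term $-a\,w(0,t)^2$ (from integrating $(1+x)(w^2)_x$) dominates the cross term $2X^T P_i B_i w(0,t)$. Using (\ref{mean_delayf_stabilityy}), the transport decay $-a\!\int_0^D w^2$, the small boundary contribution $a(1+D)w(D,t)^2$, and the bounds on $g$ and $B_i(\bar K-K_i)X$, I obtain $\dot V\le-\gamma V$ for $\epsilon<\epsilon^*$, with $\gamma$ uniform over modes. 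Because the transformation uses only $\bar A,\bar B,\bar K$, the $w$-part of $V$ is continuous across switchings, while the quadratic form jumps by at most $\mu=\max_{i,j}\lambda_{\max}(P_i)/\lambda_{\min}(P_j)\ge1$, so $V(t_k^+)\le\mu V(t_k^-)$. Imposing the dwell-time bound $\tau_d>\tau_d^*:=\ln\mu/\gamma$ makes $\mu\,e^{-\gamma\tau_d}<1$, which chains across all switchings (finitely many per finite interval by (\ref{dwell_definition})) into uniform exponential decay of $V$. Mapping back through the bounded direct and inverse transformations converts this into the estimate (\ref{tra}) in the original $(X,U)$ norms, completing the argument.
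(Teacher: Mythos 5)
Your proposal is correct, but it takes a genuinely different route from the paper's. The paper's backstepping transformation (\ref{W_theta}) is built on the \emph{exact}, inapplicable predictor $P(\theta)$ of Lemma~\ref{lemma exact predic} together with the future-mode gain $K_{\sigma(t+D)}$; this produces a homogeneous transport dynamics for $W$ and concentrates the entire mismatch at the boundary $W(t)=U_1(t)-K_{\sigma(t+D)}P(t)$, whose estimation (Lemma~\ref{lemma_w(t)}) requires the explicit formula (\ref{P1(t)}) and an inductive bound on differences of ordered products of matrix exponentials over all future switches (the quantity $T_{r+1}$ in (\ref{Tk+1})). You instead use the mode-independent transformation built from $(\bar A,\bar B,\bar K)$ alone, so future switching values never enter the analysis: the mismatch splits into an $O(\epsilon)$ in-domain source $g$ in the transport PDE driven by the \emph{current}-mode deviations $A_{\sigma(t)}-\bar A$, $B_{\sigma(t)}-\bar B$, an $O(\epsilon)$ gain perturbation $B_i(\bar K-K_i)X$ in the ODE, and an $O(\epsilon)$ boundary residual supported only on the known-mode window $[t,t+\tau(t)]$. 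This is more elementary (no exact-predictor construction, no induction over future switches) and has the pleasant side effect that the integral part of your Lyapunov functional is continuous across switchings, so the jump constant $\mu$ involves only the $P_i$'s, whereas the paper's weights $b_i$ are mode-dependent. What you give up is the explicit quantification of the distance between $U_1$ and the exact predictor feedback, which is the conceptual thread the paper uses both to motivate the design and to transfer the argument to Theorem~\ref{main_theorem2}; both routes, however, yield the same structural conditions ($\epsilon^*$ decaying exponentially in $D$, and $\tau_d^*=\ln\mu/\gamma$). One small correction to a side remark: in \emph{your} decomposition the boundary discrepancy $w(D,t)$ is the true-versus-average propagation error over the known window $[t,t+\tau(t)]$, which vanishes at $\tau(t)=0$ and grows with $\tau(t)$ (only the prefactor $e^{\bar A(D-\tau(t))}$ shrinks); the uniform bound $\tau(t)\le\tau_d<D$ still makes it $O(\epsilon)$, so nothing breaks, but the monotonicity you describe belongs to the paper's decomposition, not yours.
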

  Theorem \ref{main_theorem} does not explicitly impose a restriction on the delay value. However, it requires that the distance between any two different matrices in the sets $\{A_1,\ldots,A_p\}$ and $\{B_1,\ldots,B_p\}$ remains sufficiently small (see (\ref{eps_condition_2}) for an estimate of $\epsilon^*$), also depending on the delay value. This assumption is required for two main reasons, which are related to the choice of $\bar{K}$ and of $\left( \bar{A},\bar{B} \right)$ in (\ref{U1_exp}). The first is due to the mismatch between the pair $\left(A_i,B_i\right)$, for $i=1,\ldots,p$, of the actual future mode at which the system operates and the expected pair $\left(\bar{A},\bar{B}\right)$. Such a mismatch cannot be avoided by any predictor-based controller, because the system mode at the future time $t+D$ cannot be exactly predicted at current time $t$ (particularly under the assumption that the minimum dwell time is smaller than the delay, i.e., $\tau_d < D$, which is the hardest case). This also results in the requirement of restricting the maximum of $|A_i-\bar{A}|, |B_i-\bar {B}|$ as the possibility of the system operating always in a single mode, which is not known a priori, may not be excluded. Such a condition that all $\left|A_i - \bar{A} \right|$ and $\left|B_i - \bar{B} \right|$ are small, is analogous to the conditions in \cite{Kong_P1}, \cite{A} for the case of switching in delay values rather than in plant parameters. 
  The second is due to the choice of a fixed, average gain irrespectively of the different system's modes. The requirement that $\left|A_i - \bar{A} \right|$, $\left|B_i - \bar{B} \right|$, and $\left|K_i - \bar{K} \right|$ are small guarantees the existence of a lower bound for the minimum dwell time\footnote{%
  We note that such a bound would not be needed in the nominal case in which the delay-free closed-loop system is assumed to exhibit a common Lyapunov function.} 
  under the same $\bar{K}$ (see (\ref{dwell_conditionn}) for an estimate of $\tau_d^\star$), when employing multiple Lyapunov functionals to study stability. Furthermore, it is not clear whether a mode-independent choice of the nominal control gain can be avoided, since, due to the presence of input delay, any switching signal-dependent choice of a gain could lead to a potentially closed-loop system with delay-free, nominal dynamics dictated by a non-Hurwitz matrix of the form $A_i+B_iK_j$, where $i \neq j$, $i,j \in \mathcal{P}$. In fact, the proof of Theorem \ref{main_theorem}  shows that there exists a trade-off between the allowable distance among the system's matrices and the delay length. Note also that the minimum allowable dwell time $\tau_d^\star$, naturally depends on the delay $D$ (see Lemma \ref{trans_stability}). This dependence arises because the Lyapunov functionals and prediction-error bounds used in our analysis explicitly depend on the delay length. 
  
  {The stability conditions rely on a small-variations condition among the system’s matrices and among nominal control gains, which can be also translated as norm-bounded uncertainty around a nominal (expected) system. In this sense, our stability analysis can be viewed also as related to, for example, \cite{Mazenc1}, where exponential stability of switched systems with time-varying delays is studied under small system parameter variations. However, our work differs from \cite{Mazenc1} and related works adopting such a robust control framework in a) the control design, where here we develop {predictor-based control laws} that explicitly compensate {long, constant input delays}, without restricting the size of delays or assuming known future modes and b) the stability analysis where here we introduce a backstepping transformation that we utilize for construction of multiple Lyapunov functionals}.
  
  {Given that the stability conditions derived depend on (\ref{mean_delayf_stabilityy}) and the size of $\epsilon$ in (\ref{eps}), one can consider the following two optimization aspects in the control design stage. First, in the $U(t)=U_1(t)$ design (equations (\ref{U1}), (\ref{U1_exp})) through equations~(\ref{opt_routine}), (\ref{opt_routine2}), where the expected matrices $(\bar{A}, \bar{B}, \bar{K})$ are computed such as to {minimize the distance} from the respective subsystem matrices $(A_i, B_i, K_i)$. This ensures that the expected matrices employed in the predictor-based law (\ref{U1}), (\ref{U1_exp}) are constructed in an optimal way. Second, each local feedback gain $K_i$ can itself be obtained by an optimization-based or analytical/conventional design method for the corresponding {delay-free subsystem}, for example, by solving an LQR problem, or by using {pole-placement} techniques to set the eigenvalues of matrices $A_i + B_iK_i$. }

\subsection{Stability Under (\ref{U2})}

\begin{theorem}\label{main_theorem2}
Consider the closed-loop system consisting of plant (\ref{system}) and controller (\ref{U2}) under the assumption that the pairs \\ $(A_i, \, B_i)$, for $i=1, \ldots,p$, are stabilizable. Let $K_i$ be such that $A_i+B_iK_i$, $i=1,...,p$, are Hurwitz, and thus, there exist some ${S_i= S_i^T} > 0 $, $ Q_i = Q_i^T > 0 $, satisfying (\ref{mean_delayf_stabilityy}). There exist $\bar{\tau}_d ^\star >0$, $\bar{\epsilon}^\star> 0$, such that for any $\tau_d >\bar{\tau}_d ^\star$, $\bar{\epsilon} < \bar{\epsilon}^\star$, where
    \begin{equation}\label{eps2} 
        \bar{\epsilon}= \max \limits_{i,j=1, \ldots, p} \{ |A_i-A_j|, |B_i-B_j|, |K_i-K_j|\},
    \end{equation}
and for all $X_0\in\mathbb{R}^n$, $U_0\in L^2[-D,0]$, the closed-loop system is uniformly exponentially stable in the sense that there exist positive constants $\rho$, $\bar{\xi}$, such that
    \begin{align}\label{tra2}
        \left| X(t) \right| + &\sqrt{\int_{t-D}^{t} U(\theta)^2 d\theta} \leq\rho\left( |X(0)|  + \sqrt{\int_{-D}^{0} U(\theta)^2 d\theta} \right) e^{-\bar{\xi} t }, \notag \\ & \, t \geq 0.
    \end{align} 
\end{theorem}

As in Theorem \ref{main_theorem}, the restriction on $\bar{\epsilon}$ is imposed in order for (\ref{U2}) to be close to the exact, inapplicable predictor-feedback law and in order for the choice of control gains to not be far from the nominal (delay-free) stabilizing gains of each mode. One could potentially choose a single gain, as in (\ref{U1}), to obtain a simpler formula for the controller. Nevertheless, in both cases, the differences among the $A_i$, $B_i$, $K_i$ matrices have to still be restricted.
\subsection{ Proof of Theorem \ref{main_theorem}}
{The proof relies on a series of lemmas. First, Lemmas \ref{lemma exact predic}--\ref{inverse_transformation} introduce the backstepping transformation (and its inverse), via construction of the (inapplicable) exact predictor state, which maps the switched system (\ref{system}) with controller (\ref{U1}) to a suitable target system. Lemma~\ref{lemma_u(theta)} establishes norm equivalence between the state variables of the original and target system. Lemma~\ref{lemma_w(t)} then derives an explicit upper bound on the error of the average, predictor-based law, with respect to the ideal, exact predictor-feedback law, as a function of $\epsilon$.  
Using this bound, Lemma~\ref{trans_stability} establishes uniform exponential stability of the target system under sufficiently small $\epsilon$ and sufficiently large minimum dwell time $\tau_d$.
Finally, the proof of Theorem~\ref{main_theorem} is completed by establishing (\ref{tra}) through combining Lemmas~\ref{lemma_u(theta)}--\ref{trans_stability}.}

{\allowdisplaybreaks
\begin{lemma}(Exact predictor construction.)\label{lemma exact predic}
    Let the system (\ref{system}) experience $r$-switches,  within the interval $[t,t+D)$, $r \in \mathbb{N}_0$. Then the exact predictor $P(t)$ of this system is\footnote{%
    Throughout the paper, we define the product operator as 
    \[ \prod_{n=1}^{r+1} e^{A_{m_n}(s_n - s_{n-1})} \triangleq e^{A_{m_{r+1}}(s_{r+1}-s_r)} e^{A_{m_r}(s_r-s_{r-1})}\cdots e^{A_{m_1}(s_1 - s_0)}, \] i.e., the product is taken in descending order, from \(n=r+1\) to \(n=1\). No commutativity assumption on any matrices is made in the paper.}
        \begin{align}\label{P1(t)}
            P(t) &= \prod_{n=1}^{r+1} e^{A_{m_n}(s_n-s_{n-1})}X\bigl(t+\tau(t)\bigr) +  \sum_{n=1}^{r+1} \left( \prod_{j=n}^{r} e^{A_{m_{j+1}}(s_{j+1}-s_{j})} \right. \notag \\
               &\quad  \left. \times \int_{t+\tau(t)-D+s_{n-1}}^{t+\tau(t)-D+s_n}e^{A_{m_n}(t+\tau(t)-D+s_n-\theta)} B_{m_n} U(\theta) d\theta \right),
        \end{align}
where $\tau(t)$ is as in (\ref{tr}) and $X\bigl(t+\tau(t)\bigr)$ is as in (\ref{dwell_prediction}), $m_i \in \mathcal{P}$, for $i=1,2,\ldots,r+1$, denotes the mode of the system before the $i$-th (future) switching (after current time $t$), with $m_1=\sigma(\tau_0(t))$, and $s_i \in \mathbb{R}$, for $i=1,2,\ldots,r$, denotes the $i$-th switching instant, with $s_0=0$ and $s_{r+1}=D-\tau(t)$. 
\begin{figure}[ht!]
        \centering
        \includegraphics[width=8.7 cm]{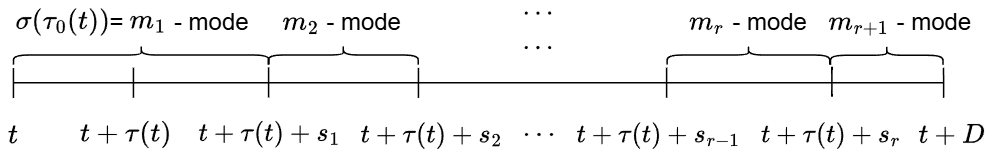}
        \caption{Future switching instants and respective modes in interval $[t, t+D]$.
        \label{fig2}}
        \end{figure}
    
    \begin{proof}
     {The predictor state is obtained by rewriting system (\ref{system}) as an ODE for $P(\theta)=X(\theta+D)$, then solving it on each constant–mode sub-interval, and subsequently applying the solution formula recursively to derive the explicit expression in \eqref{P1(t)}.} Towards this, setting $t=\theta + D$ and $P(\theta)=X(\theta+D)$, system (\ref{system}) becomes
        \begin{equation}\label{dP_theta_dwell}
            \frac{dP(\theta)}{d\theta}=A_{\sigma_{(\theta+D)}} P(\theta) + B_{\sigma_{(\theta+D)}} U(\theta).
        \end{equation}
        In the interval $[t, t+\tau(t)] \subseteq [t,t+D)$ the system operates in $\sigma(\tau_0(t))$ constant mode, as depicted in Figure \ref{fig2}, thus
        \begin{align}\label{P_theta_dwell}
            P(\theta) &= e^{{A_{\sigma(\tau_0(t))}}(\theta-t+D)} X(t) + \int_{t-D}^{\theta} e^{{A_{\sigma(\tau_0(t))}}(\theta - s)} {B_{\sigma(\tau_0(t))}}U(s) d s,
        \end{align}
        for $t-D \leq \theta \leq t+ \tau(t) - D  $. Setting $\theta=t + \tau(t) - D  $ we reach (\ref{dwell_prediction}). For  $ t + \tau(t) - D \leq \theta \leq t$  we further divide the interval $[t+\tau(t), t+D]$ into intervals of constant modes as
        \begin{align}\label{td_subintervals}
            t + \tau(t) - D + s_{i-1} &\leq \theta \leq t+ \tau(t) - D +s_{i},
        \end{align} for $i=1,2,\ldots,r+1$. In each sub-interval the system does not exhibit switching, and thus, we set ${m_i}=\sigma(\theta+D),$  for $i=1,...,r+1$. We can now proceed to the solution of (\ref{dP_theta_dwell}) {for each sub-interval for $\theta$ defined in (\ref{td_subintervals}),} which is extracted from the standard form of the general solution to a time-invariant ODE system as 
        \begin{align}\label{P_theta_dwell2}
            P(\theta) &= e^{{A_{m_i}}(\theta-t-\tau(t)+D-s_{i-1})} X(t+\tau(t)+s_{i-1}) \notag \\
                      &\quad + \int_{t+\tau(t)-D+s_{i-1}}^{\theta} e^{{A_{m_i}}(\theta - s)} {B_{m_i}}U(s) d s.
        \end{align} {
        Recalling that $P(\theta)=X(\theta+D)$, we expand recursively (\ref{P_theta_dwell2}). Hence for $i=1$ we have that
        \begin{align}\label{P_theta_dwellE4}
            X(t+\tau(t)+s_1) &= e^{{A_{m_1}}s_1} X(t+\tau(t)) \notag \\
            & \quad + \int_{t+\tau(t)-D}^{t+\tau(t)-D+s_{1}} e^{{A_{m_1}}(t+\tau(t)-D+s_{1} - s)} {B_{m_1}}U(s) d s,
        \end{align}
    for $X(t+\tau(t))$ given in (\ref{dwell_prediction}).
    Substituting (\ref{P_theta_dwellE4}) in (\ref{P_theta_dwell2}), for $i=2$ we obtain
        \begin{align}\label{P_theta_dwellE5}
            X(t+\tau(t)+s_2) &= e^{{A_{m_2}}(s_2-s_1)}e^{{A_{m_1}}s_1} X(t+\tau(t)) + e^{{A_{m_2}}(s_2-s_1)} \notag \\ &\quad \times \int_{t+\tau(t)-D}^{t+\tau(t)-D+s_{1}} e^{{A_{m_1}}(t+\tau(t)-D+s_{1} - s)} {B_{m_1}}U(s) d s \notag \\ 
            & \quad +\int_{t+\tau(t)-D+s_1}^{t+\tau(t)-D+s_{2}} e^{{A_{m_2}}(t+\tau(t)-D+s_{s} - s)} {B_{m_2}}U(s) d s.
        \end{align}
        Adopting the convention $\prod_{j=i}^{i-1}(\cdot)\equiv 1$, it follows, in a similar manner, from the recursive application of the solution formula (\ref{P_theta_dwell2}) and by induction that for any $i\in[1,r+1]$}
        \begin{align} \label{X(t+si)}
            X(t+\tau(t) + s_i) &= \prod_{n=1}^{i}{e^{{A_{m_n}}(s_n-s_{n-1})}} X\bigl(t+\tau(t)\bigr) \notag \\
                &\quad  + \sum_{n=1}^{i}\prod_{j=n}^{i-1}{e^{{A_{m_{j+1}}}(s_{j+1}-s_{j})}} \notag \\ &  \times \int_{t+\tau(t) - D + s_{n-1}}^{t+\tau(t) - D + s_n}e^{A_{m_n}(t+\tau(t)-D+s_n-\theta)}  B_{m_n}U(\theta)d\theta.
        \end{align}  
        Setting $i=r+1$ in (\ref{td_subintervals}) and $\theta=t$, from (\ref{P_theta_dwell2}) we reach
        \begin{align}\label{Pt_small}
            P(t)&=e^{A_{m_{r+1}}(D-\tau(t)-s_{r})} X(t+\tau(t)+s_{r}) \notag \\ 
                &\quad +\int_{t+\tau(t)-D+s_r}^{t} e^{A_{m_{r+1}}(t-\theta)} B_{m_{r+1}} U(\theta) d \theta . 
        \end{align} 
        Substituting (\ref{X(t+si)}) in (\ref{Pt_small}), we get (\ref{P1(t)}).    
    \end{proof}
\end{lemma}
}
\begin{lemma}(Backstepping transformation.)\label{Backstepping Transformation}
    The following backstepping transformation, 
        \begin{equation}\label{W_theta}
            W(\theta)=U(\theta) - K_{\sigma{(t+D)}} P(\theta), \quad t-D \leq \theta \leq t,
        \end{equation}
        where $P$ is obtained from (\ref{P_theta_dwell}) and (\ref{P_theta_dwell2}), transforms system (\ref{system}) to the target system 
        \begin{align}
            \dot{X}(t) &= \left(A_{\sigma(t)}+B_{\sigma(t)}K_{\sigma{(t)}}\right) X(t) + B_{\sigma(t)} W(t-D) \label{Xd_trans} , \\
                  W(t) &= U(t)-K_{\sigma{(t+D)}}P(t) ,\quad t\geq 0, \label{W_trans}
        \end{align}
    where $U$ and $P$ are given in (\ref{U1}) and (\ref{P1(t)}), respectively.

    \begin{proof}
    {Substituting the backstepping transformation into the plant dynamics (\ref{system}) and using $P(t-D)=X(t)$ shows that system \eqref{system} is mapped exactly to the target system \eqref{Xd_trans}, \eqref{W_trans}.} In more detail, system (\ref{system}) can be written as
        \begin{align}
            \dot{X}(t) &= \left(A_{\sigma(t)} + B_{\sigma(t)}K_{\sigma{(t)}}\right) X(t) \notag \\
                       &\quad + B_{\sigma(t)} \left(U(t-D) -  K_{\sigma{(t)}}X(t)\right). \label{2.16}
        \end{align}
        We now use (\ref{W_theta}). Setting $\theta=t-D$, from (\ref{P_theta_dwell}) we get $P(t-D)=X(t)$. Observing (\ref{P_theta_dwell}), (\ref{P_theta_dwell2}), and (\ref{2.16}), transformation (\ref{W_theta}) maps the closed-loop system consisting of the plant (\ref{system}) and the control law (\ref{U1}), to the target system (\ref{Xd_trans}), (\ref{W_trans}).
    \end{proof}
\end{lemma}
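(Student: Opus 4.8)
The plan is to verify the transformation by direct substitution, relying on only one fact about the exact predictor $P$ together with its consequence: that $P$ is a genuine $D$-ahead predictor, so that $P(\theta)=X(\theta+D)$ on $[t-D,t]$, which yields the boundary value $P(t-D)=X(t)$.

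First I would complete the feedback inside the plant. Rewriting (\ref{system}) by adding and subtracting $B_{\sigma(t)}K_{\sigma(t)}X(t)$ produces (\ref{2.16}), which isolates the residual term $U(t-D)-K_{\sigma(t)}X(t)$. The whole lemma then reduces to identifying this residual with the delayed transformed input $W(t-D)$.

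Next I would establish the boundary identity $P(t-D)=X(t)$. Evaluating the constant-mode predictor (\ref{P_theta_dwell}) at $\theta=t-D$ collapses the matrix exponential to the identity and makes the convolution integral vanish, giving $P(t-D)=X(t)$ directly. Since at $\theta=t-D$ the predicted mode is $\sigma\bigl((t-D)+D\bigr)=\sigma(t)$, the gain attached to $P$ in (\ref{W_theta}) at this endpoint is exactly $K_{\sigma(t)}$; hence $W(t-D)=U(t-D)-K_{\sigma(t)}P(t-D)=U(t-D)-K_{\sigma(t)}X(t)$. Substituting into (\ref{2.16}) yields precisely (\ref{Xd_trans}), while (\ref{W_trans}) is merely (\ref{W_theta}) read at the running time, completing the argument.

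The point requiring the most care is the switching-index bookkeeping rather than the algebra. Because $P$ is assembled piecewise over the future constant-mode subintervals in (\ref{P_theta_dwell2}), one must confirm that the mode label carried by the gain in (\ref{W_theta}) is consistent with the mode used to propagate $P$ on each subinterval, and in particular coincides with $\sigma(t)$ exactly at the lower endpoint $\theta=t-D$. This consistency is inherited from $P(\theta)=X(\theta+D)$ being an exact predictor, so the cancellation is exact and no smallness of $\epsilon$ in (\ref{eps}) is invoked here; the approximation error enters only later, through the nonzero size of $W(t)$, which measures how far the applied input $U_1$ deviates from the exact, mode-matched predictor feedback $K_{\sigma(t+D)}P(t)$.
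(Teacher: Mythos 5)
Your proposal is correct and follows essentially the same route as the paper's proof: rewrite the plant as (\ref{2.16}) by adding and subtracting $B_{\sigma(t)}K_{\sigma(t)}X(t)$, use (\ref{P_theta_dwell}) at $\theta=t-D$ to get $P(t-D)=X(t)$, and identify the residual $U(t-D)-K_{\sigma(t)}X(t)$ with $W(t-D)$. Your explicit remark that the gain index in (\ref{W_theta}) must be read as $\sigma(\theta+D)$ at the running point---so that it equals $K_{\sigma(t)}$ at the endpoint $\theta=t-D$---is exactly the bookkeeping the paper leaves implicit in its final ``observing \ldots'' step, and no smallness of $\epsilon$ is needed here, as you note.
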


\begin{lemma}(Inverse backstepping transformation.)\label{inverse_transformation} 
    The inverse backstepping transformation of $W$ is
    \begin{equation}\label{inverse_theta}
        U(\theta)=W(\theta) + K_{\sigma{(t+D)}} \Pi(\theta),
    \end{equation}
    where 
    \begin{align}\label{Pi_theta_dwell}
        \Pi(\theta) &= e^{({A_{\sigma(\tau_0(t))}+B_{\sigma(\tau_0(t))}K_{\sigma(\tau_0(t))})}(\theta-t+D)} X(t) \notag \\
                    &\quad + \int_{t-D}^{\theta} e^{({A_{\sigma(\tau_0(t))}+B_{\sigma(\tau_0(t))}K_{\sigma(\tau_0(t))})}(\theta - s)} {B_{\sigma(\tau_0(t))}}W(s) d s,
    \end{align} for $t-D \leq \theta \leq  t+\tau(t)-D$, and
    \begin{align}\label{Pi_theta_dwell2}
        \Pi(\theta) &= e^{({A_{m_i}+B_{m_i}K_{m_i})}(\theta-t-\tau(t)+D-s_{i-1})} X(t+\tau(t)+s_{i-1}) \notag \\
                    &\quad + \int_{t+\tau(t)-D+s_{i-1}}^{\theta} e^{({A_{m_i}+B_{m_i}K_{m_i})}(\theta - s)} {B_{m_i}}W(s) d s,
    \end{align} for $\theta$ belonging to each interval defined in (\ref{td_subintervals}).

\begin{proof}
{The proof relies on analogous steps to the proof of Lem-ma~\ref{Backstepping Transformation} starting with the target system (\ref{Xd_trans}), based on which we obtain $\Pi$ in (\ref{Pi_theta_dwell}) by integration.} In more detail, we observe from (\ref{W_theta}) that  \[ U(\theta)=W(\theta) + K_{\sigma(t+D)}P(\theta).\] Solving the ODE (\ref{Xd_trans}) in a similar way as (\ref{dP_theta_dwell}) it can be shown that $\Pi(\theta)=X(\theta+D)$, where $\Pi(\theta)$ is given from (\ref{Pi_theta_dwell}) and (\ref{Pi_theta_dwell2}), and it holds that $\Pi(\theta)=P(\theta)$. 
\end{proof}
\end{lemma}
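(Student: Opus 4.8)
The plan is to reduce the lemma to a single uniqueness argument. The direct transformation (\ref{W_theta}) already yields $U(\theta) = W(\theta) + K_{\sigma(t+D)} P(\theta)$ by trivial rearrangement, so the only genuine content is to show that the quantity $\Pi(\theta)$ defined piecewise by (\ref{Pi_theta_dwell}), (\ref{Pi_theta_dwell2}) coincides with the predictor $P(\theta) = X(\theta+D)$. Once $\Pi = P$ is established, (\ref{inverse_theta}) follows at once. Thus the task is to re-express the future state $X(\theta+D)$ through the target (closed-loop) dynamics and the transformed input $W$, rather than through the open-loop dynamics and $U$ as in Lemma \ref{lemma exact predic}; here one must read the gain in (\ref{W_theta}) as the mode-dependent gain active at the prediction time $\theta+D$, so that it equals $K_{m_i}$ on each constant-mode sub-interval.

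First I would derive the ODE satisfied by $X(\theta+D)$ along the target system. Setting $t = \theta + D$ in the target dynamics (\ref{Xd_trans}) gives
\begin{equation*}
\frac{d}{d\theta} X(\theta+D) = \left( A_{\sigma(\theta+D)} + B_{\sigma(\theta+D)} K_{\sigma(\theta+D)} \right) X(\theta+D) + B_{\sigma(\theta+D)} W(\theta),
\end{equation*}
the closed-loop analogue of (\ref{dP_theta_dwell}). This linear ODE inherits the same piecewise-constant switching structure exploited in Lemma \ref{lemma exact predic}: on the first sub-interval $t-D \leq \theta \leq t+\tau(t)-D$ the active mode is $\sigma(\tau_0(t))$ by the minimum dwell-time property (exactly as in the derivation of (\ref{P_theta_dwell})), while on each subsequent sub-interval (\ref{td_subintervals}) the active mode is $m_i$, with closed-loop matrix $A_{m_i} + B_{m_i} K_{m_i}$.

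Next I would solve this ODE forward by variation of constants, interval by interval, matching the endpoint value at each switching instant. Using the boundary value $X(t) = P(t-D)$ obtained by setting $\theta = t-D$ in (\ref{P_theta_dwell}), integration over the first sub-interval produces exactly (\ref{Pi_theta_dwell}); continuing across the subsequent constant-mode sub-intervals, with the matched left-endpoint data $X(t+\tau(t)+s_{i-1})$ on each, produces (\ref{Pi_theta_dwell2}). This computation is the verbatim closed-loop mirror of the construction in Lemma \ref{lemma exact predic}, with the pair $(A_{m_i}, U)$ replaced throughout by $(A_{m_i}+B_{m_i}K_{m_i}, W)$.

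Finally, I would invoke uniqueness of solutions of the piecewise linear ODE to conclude $\Pi(\theta) = X(\theta+D) = P(\theta)$: substituting $W(\theta) = U(\theta) - K_{\sigma(\theta+D)} P(\theta)$ into the target ODE above collapses it term-by-term back to (\ref{dP_theta_dwell}), so $\Pi$ and $P$ satisfy the same initial-value problem on every sub-interval and agree at each switching instant. The hard part will be the bookkeeping at the switches: one must verify that the value carried out of each sub-interval matches the value fed into the next, i.e.\ continuity of $X(\cdot+D)$ across switches (consistent with the no-state-jump assumption), and that the matrices and gains are consistently indexed by the mode $\sigma(\theta+D)$ active on the relevant sub-interval rather than by a single fixed mode. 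With this matching verified, (\ref{inverse_theta}) follows by substituting $\Pi = P$ into $U = W + K_{\sigma(t+D)} P$.
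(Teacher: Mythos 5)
Your proposal is correct and follows essentially the same route as the paper's (very terse) proof: rearrange the direct transformation (\ref{W_theta}) for $U(\theta)$, solve the target dynamics (\ref{Xd_trans}) in shifted time by variation of constants on each constant-mode sub-interval exactly as (\ref{dP_theta_dwell}) was solved, and conclude $\Pi(\theta)=X(\theta+D)=P(\theta)$. Your added care about reading the gain as $K_{\sigma(\theta+D)}$ (hence $K_{m_i}$ on each sub-interval of (\ref{td_subintervals})) and about continuity of the carried-over endpoint values at switches merely makes explicit what the paper's sketch leaves implicit, consistent with (\ref{W_trans}).
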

\begin{lemma}(Norm equivalency.)\label{lemma_u(theta)}  
    For the inverse transformation (\ref{inverse_theta})--(\ref{Pi_theta_dwell2}) the following inequality holds for some positive constant $\nu_1$ 
    \begin{equation}\label{ut_bound} 
      |X(t)|^2 + \int_{t-D}^{t}{| U(\theta) |^2 d \theta} \leq \nu_1 \left( |X(t)|^2 + \int_{t-D}^{t}{| W(\theta) |^2 d \theta} \right).
    \end{equation}
    Similarly, for the direct transformation (\ref{W_theta}) with (\ref{P_theta_dwell}), (\ref{P_theta_dwell2}),  it holds for some positive constant $\nu_2$
    \begin{equation}\label{wt_bound}
       |X(t)|^2 + \int_{t-D}^{t}{| W(\theta) |^2 d \theta} \leq \nu_2 \left( |X(t)|^2 + \int_{t-D}^{t}{| U(\theta) |^2 d \theta} \right).
    \end{equation}

    \begin{proof}
    {Using the direct and inverse backstepping transformations, combined with Young’s inequality (see, e.g., Appendix A of \cite{MirkinKrsticBook}) and derivation of bounds on the norms of the predictors (in terms of the norm of the respective systems' states), we derive estimates that establish norm equivalence between the norm of state $(X,U)$ and the norm of state $(X,W)$.} Towards that end, from (\ref{inverse_theta}), for the inverse transformation we apply Young's inequality to obtain 
        \begin{align}\label{U_theta_bound}
            \int_{t-D}^{t}{| U(\theta) |^2 d \theta} &\leq 2 \left( \ \int_{t-D}^{t}{ |W(\theta) |^2 d\theta} \right. \notag \\ 
                        &\left. \quad + {M_K}^2  \int_{t-D}^{t}{  \left| { \Pi(\theta)   } \right| ^2 d \theta}  \right),
        \end{align}
        {where for any matrix $R_i$, where $R$ can be $A,B,K, H,$ we set}
        \begin{equation}
         M_R = \max \{|R_1|,\ldots,|R_p|\}, \label{M_A}
        \end{equation}
        with $H_i=A_i + B_i K_i$, $i=1,\ldots,p$. In the interval $t-D \leq \theta \leq t+ \tau(t) - D $ from (\ref{Pi_theta_dwell}) we get
        \begin{align}\label{ut1_bound}
            &\int_{t-D}^{t+\tau(t)-D}{| U(\theta) |^2 d \theta} \leq 4 M_K^2 \tau(t) e^{2M_H \tau(t)}  |X(t)|^2  \notag \\
            & \quad \ \ \ \ + \left(2+4M_K^2 \tau(t)^2 e^{2 M_H \tau(t)} M_B^2 \right) \int_{t-D}^{t+\tau(t)-D}{| W(\theta) |^2 d \theta}.
        \end{align}
        For  $ t + \tau(t) - D \leq \theta \leq t$ divided in the sub-intervals as in (\ref{td_subintervals}), utilizing (\ref{P_theta_dwell2}) we get
        \begin{align}\label{ut2_bound}
            &\int_{t+\tau(t)-D}^{t}{| U(\theta) |^2 d \theta} \leq 4 M_K^2 \bigl(D-\tau(t)\bigr) e^{2M_H \bigl(D-\tau(t)\bigr)} \notag \\ 
            & \ \times |X\bigl(t+\tau(t)\bigr)|^2 + \left(2+4M_K^2 \bigl(D-\tau(t)\bigr)^2 e^{2 M_H \bigl(D-\tau(t)\bigr)} M_B^2 \right) \notag \\
            & \ + \left(2+4M_K^2 \bigl(D-\tau(t)\bigr)^2 e^{2 M_H \bigl(D-\tau(t)\bigr)} M_B^2 \right) \notag \\
            & \ \times \int_{t+\tau(t)-D}^{t}{| W(\theta) |^2 d \theta}.
        \end{align}
        Using triangle inequality and (\ref{dwell_prediction}), it holds that
        \begin{align}\label{X(t+tr)_bound_square}
            |X\bigl(t+\tau(t)\bigr)|^2 &\leq 2 e^{2 M_H \tau(t)} |X(t)|^2  \notag \\
            & \quad + 2 \tau(t) e^{2M_H \tau(t)} M_B^2 \int_{t-D}^{t+\tau(t)-D}|W(\theta)|^2 d\theta.
        \end{align}
        Combining (\ref{ut1_bound})--(\ref{X(t+tr)_bound_square}) we reach (\ref{ut_bound}),  where 
        \begin{align}\label{nuu1}
            \nu_{1} &=  \max \left\{ 4 M_{K}^2 D e^{2 M_H D}+1, 4 M_{K}^2 D^2 e^{2 M_H D} M_{B}^2 +2 \right\}.
        \end{align}   
        Analogously, using the direct transformation from (\ref{P_theta_dwell}), (\ref{P_theta_dwell2})  and (\ref{W_theta}), we can similarly prove (\ref{wt_bound}) via (\ref{P1(t)}), where
        \begin{align}\label{nuu2}
            \nu_{2} &=  \max \left\{ 4 M_{K}^2 D e^{2 M_A D}+1, 4 M_{K}^2 D^2 e^{2 M_A D} M_{B}^2 +2 \right\}.
        \end{align}
    \end{proof}
\end{lemma}

\begin{lemma}(Bound on error due to predictor mismatch.)\label{lemma_w(t)} Variable $W(t)$ defined in (\ref{W_trans}), under (\ref{U1}), satisfies
    \begin{equation}\label{Wt_bound}
        | W(t) | \leq \lambda(\epsilon,\tau(t)) \left(  |X(t)| +  \int_{t-D}^{t}{|U(\theta)|d\theta} \right), \quad t \geq 0,
    \end{equation}
    where $\lambda : \mathbb{R}_{+} \times [0,\tau_d] \to \mathbb{R}_{+}$ and for each fixed \(\tau \), the map \(\epsilon \mapsto \lambda(\epsilon,\tau)\) is a class \(K_{\infty}\) function of \(\epsilon\) defined in (\ref{eps}).
    
    \begin{proof}
    { The proof relies on the following steps. The error $W$, between the predictor-based law (\ref{U1}), (\ref{U1_exp}) and the exact, predictor-feedback law, is first rewritten in terms of the differences among system matrices and among control gains. Subsequently, via algebraic manipulations, we obtain an upper bound on $|W|$ in terms of the norm of the state $(X,U)$ with a factor $\lambda$ (as in (\ref{Wt_bound})), which we explicitly derive in terms of $\epsilon$ and $\tau(t)$. Towards this end, we proceed as follows.}
    
    For $W(t)$ defined in (\ref{W_trans}), using (\ref{P1(t)}) we write
        \begin{equation}\label{Wt}
            W(t) =  \Delta_1(t) + \Delta_2(t),
        \end{equation}
        where 
        \begin{align}
            \Delta_1(t) &=  \left(\bar{K} \prod_{n=1}^{r+1}{e^{{\bar{A}}(s_n-s_{n-1})}}- K_{\sigma{(t+D)}} \prod_{n=1}^{r+1}{e^{{A_{m_n}}(s_n-s_{n-1})}}\right) \notag \\
                        & \quad \times X\bigl(t+\tau(t)\bigr),  \label{D1t1}
        \end{align}
        \begin{align}
            \Delta_2(t) &=  \sum_{n=1}^{r+1} \left( \bar{K}  \prod_{j=n}^{r}{e^{\bar{A}(s_{j+1}-s_{j})}}\int_{t+\tau(t)-D+s_{n-1}}^{t+\tau(t)-D+s_n}e^{\bar{A}(t+\tau(t)-D+s_n-\theta)} \right. \notag \\ 
                        &\qquad \times \bar{B} U(\theta)d\theta  \notag \\ 
                        & - K_{\sigma{(t+D)}} \prod_{j=n}^{r}{e^{{A_{m_{j+1}}}(s_{j+1}-s_{j})}}\int_{t+\tau(t)-D+s_{n-1}}^{t+\tau(t)-D+s_n}e^{A_{m_n}(t-D+s_n-\theta)} \notag \\
                        &\qquad \left. \times B_{m_n}U(\theta)d\theta\right) \label{D2t1}.
        \end{align}
        For any matrix $\bar{R}, \, R_i$, we set         
        \begin{align}
            \Delta R_{i} &= \bar{R} - R_{i}, \label{Rmn}\\
            \epsilon_{R_{_{i}}} &= \left| \Delta R_{i} \right|, \label{epsR1}  \\
            \epsilon_R &= \max \limits_{i=1,\ldots,p} \{\epsilon_{R_{i}}\}, \label{epsR}      
        \end{align}    
        and since $m_{r+1}=\sigma(t+D)\in \mathcal{P}$, $t \geq 0$, we can re-write (\ref{D1t1}) as
        \begin{align}
            \Delta_1(t) &= \left( K_{\sigma{(t+D)}}\left( \prod_{n=1}^{r+1}{e^{{\bar{A}}(s_n-s_{n-1})}} -\prod_{n=1}^{r+1}{e^{{A_{m_n}}(s_n-s_{n-1})}}\right) \right.  \notag \\
                    &\left. \quad  + \Delta K_{m_{r+1}}\prod_{n=1}^{r+1}{e^{\bar{A}(s_n-s_{n-1})}} \right) X\bigl(t+\tau(t)\bigr). \label{D1t}
        \end{align}
        Using property $M'N'-MN=M'(N'- N) + (M'-M)N, $ where $M,N,M',N'$ denote arbitrary matrices, we write (\ref{D2t1}) as
        \begin{align}\label{d2t_new}
            \Delta_2(t) &=  \sum_{n=1}^{r+1} \left\{ K_{\sigma{(t+D)}} \left\{ \prod_{j=n}^{r} { e^{{\bar{A}}(s_{j+1}-s_{j})}}\left(Z_{1,n}(t) +Z_{2,n}(t)\right) \right. \right. \notag \\ & \quad \left. \left.  + Z_{3,n}(t)\right\} + \Delta_{K_{m_{r+1}}} \left(  \prod_{j=n}^{r}{e^{\bar{A}(s_{j+1}-s_{j})}} \right. \right. \notag \\
                        &\quad \left. \left.  \times \int_{t+\tau(t)-D+s_{n-1}}^{t+\tau(t)-D+s_n}e^{\bar{A}(t+\tau(t)-D+s_n-\theta)} \bar{B} U(\theta)d\theta \right)  \right\},
        \end{align}
        where
        \begin{align}
            Z_{1,n}(t) &=  \int_{t+\tau(t)-D+s_{n-1}}^{t+\tau(t)-D+s_n}{e^{\bar{A}(t+\tau(t)-D+s_n-\theta)}(\bar{B}-B_{m_n})U(\theta)d\theta},\label{Z1}  \\
            Z_{2,n}(t) &=  \int_{t+\tau(t)-D+s_{n-1}}^{t+\tau(t)-D+s_n}\left( e^{\bar{A}(t+\tau(t)-D+s_n-\theta)}\right. \notag \\ 
                       &\quad \left. - e^{A_{m_n}(t+\tau(t)-D+s_n-\theta)} \right) B_{m_n}U(\theta)d\theta, \label{Z2}
        \end{align}
        \begin{align}
            Z_{3,n}(t) &=  {\left(\prod_{j=n}^{r}{e^{\bar{A}(s_{j+1}-s_{j})}}-\prod_{j=n}^{r}{e^{{A_{m_{j+1}}(s_{j+1}-s_{j})}}}\right)} \notag \\ 
                       &\quad \times \int_{+\tau(t)-D+s_{n-1}}^{t+\tau(t)-D+s_n}e^{A_{m_{j+1}}(t+\tau(t)-D+s_n-\theta)}B_{m_n}U(\theta)d\theta  .  \label{Z3}
        \end{align}
    Setting $Y_1=A_{m_i}(s_i-s_{i-1}), \, Y_2=\Delta A_{m_i}(s_i-s_{i-1})$,
    for $m_i$, $s_i$ defined in Lemma \ref{lemma exact predic}, and using the fact that for any two $n \times n$ matrices $Y_1$, $Y_2$ the following inequality holds 
    \begin{align} \label{series}
       \left| e^{Y_1+Y_2}-e^{Y_1} \right| \leq |Y_2| e^{ |Y_1| }e^{ |Y_2| },
    \end{align} 
    where $|\cdot|$ denotes an arbitrary induced matrix norm\footnote{Proof of (\ref{series}) relies on the power series expansion for the matrix exponential and triangle inequality, see, e.g., \cite{Lie}.}, we have from (\ref{Rmn})--(\ref{epsR}), (\ref{series}), that
    \begin{align}\label{TheorResult}
        \left |e^{\bar{A}(s_n-s_{n-1})} - e^{A_{m_n} (s_n-s_{n-1})} \right | &\leq \epsilon_{A}  (s_n-s_{n-1}) e^{ \bar{M}_A (s_n-s_{n-1}) }   \notag \\ 
        & \quad  \times e^{ \epsilon_{A} (s_n-s_{n-1}) }  ,
    \end{align}  
    {where for any matrix $\hat{R}$, where $\hat{R}$ can be ${A}, {B},$ we set
    \begin{equation}
         \bar{M}_{\hat{R}} = \max \left\{\left|{\bar{\hat{R}}}\right|,M_{\hat{R}}\right\}. \label{M_ABAR}
    \end{equation}}
    We now upper bound the expression from (\ref{D1t}). We define
    \begin{equation}\label{Tk+1}
        T_{r+1}=\left|  \prod_{n=1}^{r+1}{e^{{\bar{A}}(s_n-s_{n-1})}-\prod_{n=1}^{r+1}{e^{{A_{m_n}}(s_n-s_{n-1})}}}  \right|.
    \end{equation}
    Developing (\ref{Tk+1}) for each iteration, for $r=0$, the result in (\ref{TheorResult}) can be directly applied to (\ref{Tk+1}). For $r=1$,
        \begin{align}
            T_2 = \left| e^{\bar{A}s_1}e^{\bar{A}(s_2-s_1)} -e^{A_{m_1}s_1} e^{A_{m_2}(s_2-s_1)}  \right|.
        \end{align}
        We expand the difference within the norm and using the triangle inequality for the norm bounds we get \begin{align}\label{n=2}
            T_2 & \leq \left| e^{\bar{A} s_1} \right| \left| e^{\bar{A}(s_2-s_1)} - e^{A_{m_2} (s_2-s_1)}   \right|  + \left| e^{A_{m_2} (s_2-s_1)} \right| T_1 .
        \end{align}
        We apply now (\ref{TheorResult}) to (\ref{n=2}), to obtain
        \begin{align}
            T_2 &\leq e^{\bar{M}_A  s_2}  \epsilon_A \left( e^{\epsilon_A s_2} (s_2-s_1) + e^{\epsilon_A s_1} (s_1-s_0)      \right) \notag \\
            &\leq e^{(\bar{M}_A  + \epsilon_A)s_2}  \epsilon_A  s_2.
        \end{align}
        For some $r$, we assume that the following expression holds
        \begin{align}\label{tkResult}
            T_{r} \leq e^{(\bar{M}_A  + \epsilon_A)s_r}  \epsilon_A  s_r.
        \end{align}
    We prove that the formula holds generally using the induction method.   
    Since (\ref{n=2}), (\ref{tkResult}) hold, expanding (\ref{Tk+1}) and using the triangle inequality we get
    \begin{align}\label{tk+1last}
        T_{r+1} & \le \left| e^{\bar{A}(s_{r+1}-s_r)}-e^{A_{m_{r+1}}(s_{r+1}-s_r)} \right|\left|\prod_{n=1}^{r}{e^{\bar{A}(s_n-s_{n-1})}}\right| \notag \\ 
               &\quad + \left|e^{A_{m_{r+1}}(s_{r+1}-s_r)}\right| T_r. 
    \end{align}
    Applying (\ref{TheorResult}) and (\ref{tkResult}) to (\ref{tk+1last}) we get
    \begin{align}\label{tk+1Result}
        T_{r+1} & \leq e^{(\bar{M}_A  + \epsilon_A)\bigl(D-\tau(t)\bigr)}  \epsilon_A  \bigl(D-\tau(t)\bigr),
    \end{align}
    which makes (\ref{tkResult}) legitimate for all $r$.
    Hence, applying any arbitrary induced matrix norm and substituting (\ref{tk+1Result}) in (\ref{D1t})  we get

    \begin{align}\label{d1t_bound}
        \left|\Delta_1(t)\right| &\leq e^{(\bar{M}_A  + \epsilon_A)\bigl(D-\tau(t)\bigr)} \left[ {M}_K \epsilon_A  \bigl(D-\tau(t)\bigr) + \epsilon_K   \right] \notag \\ 
        & \quad \times \left|X\bigl(t+\tau(t)\bigr) \right|.
    \end{align}
    Next, we observe that for any variable $\Psi$ that can be $U,W,\Pi$
        \begin{align}\label{series_trans}
            \int_{t+\tau(t)-D}^{t}{| \Psi(\theta) | d \theta} &= \sum_{i=1}^{r+1}{\int_{t+\tau(t)-D+s_{i-1}}^{t+\tau(t)-D+s_i}{| \Psi(\theta) | d \theta}}.
        \end{align}
     We can proceed now with bounding separately the terms in (\ref{d2t_new}), using the triangle inequality and the submultiplicative property. Applying any induced matrix norm in (\ref{Z1}), along with (\ref{epsR}), (\ref{series_trans}) we get
        \begin{align}\label{Z1bnd}
           \sum_{n=1}^{r+1}  \left | \prod_{j=n}^{r}{e^{{\bar{A}}(s_{j+1}-s_{j})}}  Z_{1,n}(t) \right| &\leq e^{ (\bar{M}_A + \epsilon_A) \bigl(D-\tau(t)\bigr)}   \epsilon_{B} \notag \\
           & \quad \times  \int_{t+\tau(t)-D}^{t}{|U(\theta)|d\theta}.
        \end{align}
        Similarly for the second term in (\ref{d2t_new}) we recall (\ref{TheorResult}), and hence,
        \begin{align}\label{Z2bnd}
            \sum_{n=1}^{r+1}  \left | \prod_{j=n}^{r}{e^{\bar{A}(s_{j+1}-s_{j})}}  Z_{2,n}(t) \right| &\leq 
            e^{ (\bar{M}_A + \epsilon_A) \bigl(D-\tau(t)\bigr)} \epsilon_A \bigl(D-\tau(t)\bigr) \notag \\
            & \quad \times \bar{M}_B \int_{t+\tau(t)-D}^{t}{|U(\theta)|d\theta}.
        \end{align}
        Recalling (\ref{tk+1Result}), we get similarly
        \begin{align}\label{Z3bnd}
            \sum_{n=1}^{r+1} {\left| Z_{3,n}(t) \right|} &\leq e^{ (\bar{M}_A + \epsilon_A) \bigl(D-\tau(t)\bigr)} \epsilon_A \bigl(D-\tau(t)\bigr) \notag \\ & \quad \times \bar{M}_B \int_{t+\tau(t)-D}^{t}{|U(\theta)|d\theta}.
        \end{align}
        Applying (\ref{epsR}), (\ref{series_trans})--(\ref{Z3bnd}), in (\ref{d2t_new}) we get 
        \begin{align}\label{d2t_Bound}
            \left| \Delta_2(t) \right| &\leq  e^{ (\bar{M}_A + \epsilon_A) \bigl(D-\tau(t)\bigr)} \left[ \left(\epsilon_B + 2\epsilon_A  \bigl(D-\tau(t)\bigr) \bar{M}_B\right)M_K \right. \notag \\
            & \quad \left. + \bar{M}_B  \epsilon_K  \right]\int_{t+\tau(t)-D}^{t}{|U(\theta)|d\theta}.
        \end{align}
        Next, we observe also that,
        \begin{align}\label{Xtr_bound}
            \left|X\bigl(t+\tau(t)\bigr)\right| &\leq e^{\bar{M}_A\tau(t)}\left|X(t)\right|  \notag \\ 
            & \quad +e^{\bar{M}_A\tau(t)}\bar{M}_B \int_{t-D}^{t+\tau(t)-D}{|U(\theta)|d\theta}.
        \end{align}
        Combining (\ref{d1t_bound}), (\ref{d2t_Bound}), and (\ref{Xtr_bound}) we reach (\ref{Wt_bound}), where
        \begin{equation}\label{error lambda}
            \lambda\left(\epsilon,\tau(t)\right)= \epsilon   e^{(\bar{M}_A + \epsilon)D} \max \{\delta_1(\tau(t)),\delta_2(\epsilon,\tau(t))\},
        \end{equation}
        and
        \begin{align}
            &\delta_1(\tau(t)) = \max\left\{1,\bar{M}_B\right\}  \left[M_K \bigl(D-\tau(t)\bigr) + 1   \right],  \label{error delta1} \\
            &\delta_2(\epsilon,\tau(t))=   \left[2M_K \bar{M}_B \bigl(D-\tau(t)\bigr) + \epsilon + M_K + \bar{M}_B  \right]. \label{error delta2}
        \end{align} 
    \end{proof}
\end{lemma}

\begin{lemma}(Stability of target system.)\label{trans_stability}
Assume that $(A_i,B_i)$, $i=1,...,p$, are stabilizable pairs and let $K_i$ be such that $A_i+B_iK_i$, $i=1,...,p$, are Hurwitz, and thus, there exist some ${S_i= S_i^T} > 0 $, $ Q_i = Q_i^T > 0 $, satisfying (\ref{mean_delayf_stabilityy}). For any $\epsilon < \epsilon^\star$, where $\epsilon$ is defined in (\ref{eps}) and
    \begin{align}\label{eps_condition_2}
        \epsilon^\star &= \min \left\{ \bar{\lambda}^{-1}\left(\frac{1}{ \sqrt{2 e^D D \nu_{1}}}\right), \notag \right. \\
        &\quad \left. \bar{\lambda}^{-1}\left(\frac{\min \limits_{i=1,\ldots,p} \left\{ \frac{\lambda_{\min}(Q_i)}{|B_i {S_i}|} \right\}}{ \sqrt{2 e^D D (\nu_{1}+1})} \right)  \right\},
    \end{align}  
    where $\bar{\lambda}^{-1}$ denotes the inverse function of $\bar{\lambda}(\epsilon)=\lambda(\epsilon,0)$, for $\lambda$ defined in Lemma \ref{lemma_w(t)}, there exist positive constants $\mu$, $\alpha$, $\beta$, such that the target system (\ref{Xd_trans}), (\ref{W_trans}) with (\ref{U1}), (\ref{U1_exp}), and (\ref{P1(t)}) is uniformly exponentially stable for every switching signal $\sigma$ satisfying (2) with $\tau_d >\tau_d^\star$, where     
    \begin{equation}\label{dwell_conditionn}
        \tau_d^\star= \frac{\ln \mu}{\beta}.
    \end{equation}
    In particular, the following holds
    \begin{align}\label{trans_stability_result1}
        \left| X(t) \right|^2 + & \int_{t-D}^{t} W(\theta)^2 d\theta \leq \mu \left( |X(0)|^2 + \int_{-D}^{0} W(\theta)^2 d\theta \right) \notag \\ 
        & \quad \quad  \quad \times   e^{-\left(\beta- \frac{\ln \mu}{\tau_d} + \frac{\alpha}{\bar{\tau}_d} \right)t} , \quad t \geq 0.
    \end{align}

\begin{proof}
 {The proof relies on the following steps. A Lyapunov functional is first constructed for each mode of the target system (\ref{Xd_trans}), (\ref{W_trans}) under (\ref{U1}), (\ref{U1_exp}), and its derivative is subsequently shown to be negative definite employing estimate (\ref{Wt_bound}) for $W$ from Lemma~\ref{lemma_w(t)}. Finally, the dwell time condition ensures exponential decay across switchings. Towards this end, we proceed as follows.}

    According to Lemma \ref{Backstepping Transformation}, for each subsystem of the family described by the target switched system (\ref{Xd_trans}), (\ref{W_trans}) it holds
    \begin{align}
            \dot{X}(t) &= \left(A_{i}+B_{i}K_{i}\right) X(t) + B_{i} W(t-D) \label{Xdp_trans} , \\
                  W(t) &= U(t)-K_{g}P(t), \label{Wp_trans}
        \end{align}  $ \forall i, \,g \in \mathcal{P}$.
        Consider $[t_k, t_{k+1})$ a time window where the (\ref{Xdp_trans}), (\ref{Wp_trans}) system is operating without switchings. We adopt now the following Lyapunov functional 
    \begin{equation}\label{LyapunovFnc}
        V_i(t) = X(t)^T{S_i}X(t) + b_i \int_{t-D}^{t} e^{(\theta+D-t)}W(\theta)^2 d\theta.
    \end{equation}
     Calculating the derivative of (\ref{LyapunovFnc}), along the solutions of the target subsystem for $t \in [t_k , t_{k+1})$, we obtain
    \begin{align}
        \dot{V}_{{i}}(t) &\leq -X(t)^T  Q_{i}  X(t) + B_{i}^T W(t-D){S_i} X(t) \notag \\ \
        &\quad  + X(t)^T {S_i} B_{i} W(t-D) + b_{i} e^D W(t)^2 \notag \\
        &\quad - b_{i} W(t-D)^2 - b_{i}\int_{t-D}^{t} e^{(\theta+D-t)}W(\theta)^2 d\theta. \label{dv_t_1}
    \end{align}
    Observing $-X(t)^T Q_{i}  X(t) \leq - \lambda_{\min}({Q_{i}}) |X(t)|^2$, applying \\ Young's inequality (see, e.g., Appendix A in \cite{MirkinKrsticBook}), and choosing
    \begin{equation}
        b_{i} = \frac{ 2 |B_{i}{S_i}|^2 }{ \lambda_{\min}({Q_{i}}) }, 
    \end{equation}
    we get from (\ref{dv_t_1}) that
    \begin{align}
    \label{dv_t_11}
        \dot{V}_{i}(t) &\leq -\frac{1}{2} \lambda_{\min}({Q_{i}}) |X(t)|^2 \notag \\
        & \quad + b_{i}e^D W(t)^2 - b_{i}\int_{t-D}^{t} e^{(\theta+D-t)}W(\theta)^2 d\theta.
    \end{align}
    Using Lemmas \ref{lemma_u(theta)} and \ref{lemma_w(t)} we get 
    \begin{align}\label{wt_sqrt11}
        W(t)^2 &\leq 2  \lambda^2(\epsilon,\tau(t))(D \nu_1 + 1) |X(t)|^2 + 2 \lambda^2(\epsilon,\tau(t)) D \nu_1 \notag \\
        &\qquad \times \int_{t-D}^{t}e^{(\theta+D-t)}{W(\theta)^2 d\theta}.
    \end{align}    
    Employing (\ref{wt_sqrt11}) in (\ref{dv_t_11})  we get
    \begin{align}\label{dv_t_12}
        \dot{V}_{i}(t) &\leq - \left( \frac{1}{2} \lambda_{\min}({Q_{i}}) - 2 b_{i} e^D \lambda^2(\epsilon,\tau(t))(D \nu_1 + 1) \right) |X(t)|^2   \notag \\
                   &\quad -b_{i} \int_{t-D}^{t}{ \left[1 - 2 e^D  \lambda^2(\epsilon,\tau(t)) D \nu_1        \right] e^{(\theta+D-t)} W(\theta)^2 d\theta } . 
    \end{align}
    In order to preserve negativity in (\ref{dv_t_12}) we require
    \begin{align}
        \frac{1}{2} \lambda_{\min}({Q_{i}}) - 2 b_{i} e^D \lambda^2(\epsilon,\tau(t))(D \nu_1 + 1) &> 0,  \label{a1p} \\
        1 - 2 e^D  \lambda^2(\epsilon,\tau(t)) D \nu_1 &> 0 ,  \label{a2p}
    \end{align}
    $t\in [t_k,t_{k+1})$. For fixed $\epsilon$, we observe from (\ref{error lambda})--(\ref{error delta2}) that the left-hand sides of (\ref{a1p}), (\ref{a2p}) are increasing functions in $\tau$. Thus, since $\tau : \mathbb{R_+} \rightarrow [0,\tau_d]$, then negativity is preserved under
    \begin{align}
            \frac{1}{2} \lambda_{\min}({Q_{i}}) - 2 b_{i} e^D \lambda^2(\epsilon,0)(D \nu_{1} + 1) &> 0, \label{a11p}  \\
            1 - 2 e^D  \lambda^2(\epsilon,0) D \nu_{1} &> 0 , \label{a22p}
    \end{align} 
    which hold under the restriction on $\epsilon$ in the statement of the lemma. Therefore, we conclude that 
    \begin{equation}\label{final derivative of V}
        \dot{V}_{i}(t) \leq {-a_{i,k}(t) }V_{i}(t),\quad t \in [t_k, t_{k+1}),
    \end{equation}
    where 
    \begin{align}\label{alpha}
        a_{i,k}(t) &= \min \left \{ 1-2e^D \lambda^2(\epsilon,\tau(t)) D \nu_{1}, \right. \notag \\
        &\left. \quad \frac{  \frac{1}{2}\lambda_{\min}(Q_{i}) - 2 b_{i}e^D \lambda^2(\epsilon,\tau(t))(D \nu_{1} + 1) }{\lambda_{\max}({S_i})} \right \}.
    \end{align}
    Functions $a_{i,k}(t)$ are strictly positive and decreasing for \\ $t\in [t_k,t_k+\tau_d]$, while for $ t_k+\tau_d \leq t < t_{k+1}$, we have \(a_{i,k}(t) = a_{i,k}(t_{k}+\tau_{d}) \). Moreover, we set
    \begin{equation}\label{betaa}
        \beta_i = a_{i,k}(t_{k}+\tau_{d}),
    \end{equation}
    i.e.,
        \begin{align}\label{alpha1}
            \beta_i &= \min \left \{ 1-2e^D \lambda^2(\epsilon,0) D \nu_{1}, \right. \notag \\
                    &\left. \quad \frac{  \frac{1}{2}\lambda_{\min}(Q_{i}) - 2 b_{i}e^D \lambda^2(\epsilon,0)(D \nu_{1} + 1) }{\lambda_{\max}({S_i})} \right \},
        \end{align}
        which also implies that
        \begin{align}
                a_{i,k}(t) &> \beta_i, \quad  t_k \leq t < t_k + \tau_d, \label{inequality for a=b} \\
                a_{i,k}(t) &= \beta_i, \quad  t_k+\tau_d \leq t < t_{k+1}.\label{inequality for a<b}
        \end{align}
        From (\ref{final derivative of V}) along with (\ref{inequality for a=b}), (\ref{inequality for a<b}) we reach 
        \begin{align}\label{derivative_subsystems}
            \dot{V}_{i}(t) \leq 
            \begin{cases} 
                -a_{i,k}(t) V_{i}(t), & t_k \leq t < t_k + \tau_d \\
                -\beta_i V_{i}(t), & t_k + \tau_d \leq t < t_{k+1}
            \end{cases}.
        \end{align}
        Using the comparison principle and integrating (\ref{derivative_subsystems}) we get
            \begin{equation}\label{vi_a}
                V_{i}(t) \leq e^{- \int_{t_k}^t a_{i,k}(u)du} V_{i}(t_k), \quad t_k \leq t < t_k + \tau_d.
            \end{equation}
        For \(t_k + \tau_d \leq t < t_{k+1}\), integrating (\ref{derivative_subsystems}) and using (\ref{inequality for a=b}) we get
        \begin{equation}\label{vi_b}
            V_{i}(t) \leq e^{-\beta_i(t - (t_k + \tau_d))} V_{i}(t_k + \tau_d), \quad t_k + \tau_d \leq t < t_{k+1}. 
        \end{equation}
        Hence, from (\ref{vi_a}) for $t=t_k + \tau_d$, we arrive at
        \begin{equation}\label{vi_tk+td}
            V_{i}(t_k + \tau_d) \leq e^{- \int_{t_k}^{t_k + \tau_d}a_{i,k}(u)du} V_{i}(t_k) .
        \end{equation}      
        Substituting (\ref{vi_tk+td}) into (\ref{vi_b}), we arrive at
        \begin{equation}{\label{inequality for all lyaps}}
            V_{i}(t) \leq
            \begin{cases}
                e^{- \int_{t_k}^ta_{i,k}(u)du} V_{i}(t_k), \quad t_k \leq t < t_k + \tau_d \\
                e^{-\int_{t_k}^{t_k+\tau_d} {\left(a_{i,k}(u)- \beta_i\right) du} - \beta_i (t - t_k)} V_{i}(t_k), \\ \quad  t_k + \tau_d \leq t < t_{k+1}
            \end{cases}.
        \end{equation}
        We define next
        \begin{align}
         \beta &= \min \limits_{i=1,\ldots,p} \{\beta_i\}, \label{bita}
        \end{align}
        and
        \begin{equation}\label{alpa}
            \alpha_k =  \min \limits_{i=1,\ldots,p} \left\{ \int_{t_k}^{t_k+\tau_d} {\left( a_{i,k}(u)- \beta \right) du} \right\}, \ k \in \mathbb{Z}_{\geq0}.
        \end{equation}
        We can further re-write (\ref{inequality for all lyaps}), with (\ref{bita}), (\ref{alpa}), as,
        \begin{equation}{\label{inequality for all lyaps22}}
            V_{i}(t) \leq e^{-\Gamma_k(t)} V_{i}(t_k), \quad  t \in [t_k,t_{k+1}),        
        \end{equation}
        where
        \begin{equation}\label{gamma}
            \Gamma_k(t) = 
            \begin{cases}
                \int_{t_k}^t a_{i,k}(u)du, \quad t_k \leq t < t_k + \tau_d \\
                \alpha_k + \beta (t - t_k),  \quad t_k + \tau_d \leq t < t_{k+1}
            \end{cases}.
        \end{equation}
        Result (\ref{inequality for all lyaps22}), together with (\ref{inequality for a=b}), (\ref{gamma}), show stability of individual, target subsystems (with decay rate at least $\beta$). We show next stability of the target, switched system. Pick an arbitrary \( t > 0 \), let \( t_0 := 0 \), and denote the switching times on the interval \( (0,t) \) by $\left\{ t_1,t_2, \dots,t_{l-1}, t_{l} \right\}$, where $l=N_{\sigma}(0,t)$, for $N_{\sigma}(0,t)$ representing the number of total switchings over the interval $[0,t)$. From (\ref{LyapunovFnc}) we have
        \begin{align}\label{kappas}
        \kappa_{1,i}&\left(\left| X(t) \right|^2 + \int_{t-D}^{t} W(\theta)^2 d\theta \right)  \leq V_i(t) \notag \\
                &\leq  \kappa_{2,i}\left(\left| X(t) \right|^2 + \int_{t-D}^{t} W(\theta)^2 d\theta \right),
    \end{align}
    where 
    \begin{align}
        \kappa_{1,i} &= \min \left\{\lambda_{\min}({S_i}), \frac{2|{S_i}B_i|^2}{\lambda_{\min}(Q_i)}\right\}, \\
        \kappa_{2,i} &= \max \left\{\lambda_{\max}({S_i}), \frac{2|{S_i}B_i|^2}{\lambda_{\min}(Q_i)}e^D\right\}, 
    \end{align}
    for any $i \in \mathcal{P}$.
    Setting \begin{equation} \label{const_mu}
        \mu = \frac{\kappa_{2}}{\kappa_{1}},
    \end{equation} where 
    \begin{align}
        \kappa_1 &= \min \limits_{i=1,\ldots,p} \{ {\kappa_{1,i}}\}, \label{k1} \\
        \kappa_2 &= \max \limits_{i=1,\ldots,p}\{\kappa_{2,i}\}, \label{k2}
    \end{align} and from (\ref{kappas}), (\ref{k1}), (\ref{k2}), $\forall \sigma(t_{k}) = i, \sigma(t_{k}^-) = j, \\ \forall i,j \in \mathcal{P}, \ i \neq j$, and $ \forall t_k \in [t_1,t_2, \dots,t_{l-1}, t_{l}]$, we get
    \begin{equation}\label{switch_mu}
        V_i({t_{k}}) \leq \mu V_j({t_{k}}).
    \end{equation} 
    For $t\geq t_l$, we apply (\ref{inequality for all lyaps22}) and we get 
    \begin{equation}\label{switch}
        V_{\sigma(t)}(t) \leq e^{-\Gamma_l(t) } V_{\sigma(t_l)}(t_l).
    \end{equation} 
    Applying (\ref{switch_mu}) in (\ref{switch}) we arrive at
    \begin{equation}\label{switch1}
        V_{\sigma(t)}(t) \leq  e^{-\Gamma_l(t) } \mu V_{\sigma(t_{l}^-)}(t_{l}).
    \end{equation}  
     For the window $t \in [t_{l-1}, t_{l})$ we apply again (\ref{inequality for all lyaps22}), setting $k=l-1$ 
    \begin{align}\label{previous_switch}
        V_{\sigma(t_{l}^-)}(t_{l}) &\leq  e^{-\Gamma_{l-1}(t_{l}^-) } V_{\sigma(t_{l-1})}(t_{l-1}) \notag  \\ 
        &=e^{-( \alpha_{l-1} + \beta (t_l - t_{l-1}))} V_{\sigma(t_{l-1})}(t_{l-1}).
    \end{align}
    Substituting (\ref{previous_switch}) in (\ref{switch1}) and re-applying (\ref{switch_mu}), we get
    \begin{align}\label{switch2}
         V_{\sigma(t)}(t) &\leq  e^{-\Gamma_l(t) } \mu e^{-( \alpha_{l-1} + \beta (t_l - t_{l-1}))} V_{\sigma(t_{l-1})}(t_{l-1}),  \notag \\ 
        &\leq   e^{-\Gamma_l(t) } \mu^2 e^{-( \alpha_{l-1} + \beta (t_l - t_{l-1}))} V_{\sigma(t_{l-1}^-)}(t_{l-1}) \notag \\
        &\leq   e^{-\Gamma_l(t) } \mu^2 e^{-( \alpha_{l-1} + \alpha_{l-2} + \beta (t_l - t_{l-2}))} V_{\sigma(t_{l-2})}(t_{l-2}) \ldots \notag \\
        &\leq  e^{-\Gamma_l(t) } \mu^{N_\sigma(0, t)} e^{-(\alpha_{l-1} + \alpha_{l-2} + \ldots + \alpha_{1} + \alpha_0) } e^{-\beta t_l} V_{\sigma(0)}(0).
    \end{align} 
    Furthermore, we set
    \begin{equation}\label{all alpha}
        \alpha = \min \limits_{k\geq0} \{ {\alpha_k}\}.
    \end{equation}
    Substituting (\ref{all alpha}) in (\ref{switch2}), we arrive at
    \begin{align}\label{switch3}
         V_{\sigma(t)}(t) &\leq  e^{-\Gamma_l(t) } \mu^{N_\sigma(0, t)} e^{-{N_\sigma(0, t)}\alpha} e^{-\beta t_l} V_{\sigma(0)}(0).
    \end{align} 
    Thus, from (\ref{gamma}) and (\ref{switch3}) we get
    \begin{align}\label{switch4}
             V_{\sigma(t)}(t) &\leq  e^{-\int_{t_l}^t a_{\sigma(t),l}(u)du}\mu^{N_\sigma(0, t)} e^{-{N_\sigma(0, t)}\alpha} e^{-\beta t_l} V_{\sigma(0)}(0), \notag \\ & \quad  t_l \leq t. 
    \end{align}
     From (\ref{dwell_definition}), we can conclude that ${N_\sigma(0, t)}$ satisfies
    \begin{equation}\label{switch_num}
       \frac{t}{\bar{\tau}_d} \leq N_\sigma(0, t) \leq \frac{t}{\tau_d},
    \end{equation}
    and since $\alpha > 0$, we apply (\ref{switch_num}) in (\ref{switch4}), which gives via (\ref{inequality for a=b}), (\ref{inequality for a<b})
    \begin{align}\label{trans_stability_result}
             V_{\sigma(t)}(t) &\leq e^{-\int_{t_l}^t a_{\sigma(t),l}(u)du-\beta t_l + \frac{\ln \mu}{\tau_d}t}e^{-{N_\sigma(0, t)}\alpha} V_{\sigma(0)}(0) \notag  \\ 
                              &\leq e^{-\int_{t_l}^t \beta du-\beta t_l + \frac{\ln \mu}{\tau_d}t} e^{-{N_\sigma(0, t)}\alpha}V_{\sigma(0)}(0) \notag \\
                              &\leq e^{-(\beta- \frac{\ln \mu}{\tau_d} + \frac{\alpha}{\bar{\tau}_d})t} V_{\sigma(0)}(0), \quad  t_l \leq t.
    \end{align}
    We notice that inequality (\ref{trans_stability_result}) can hold for any $t \geq 0$. Hence, combining (\ref{kappas})--(\ref{k2}) with (\ref{trans_stability_result}) we reach (\ref{trans_stability_result1}), thus, the target switched system (\ref{Xd_trans}), (\ref{W_trans}) is uniformly exponentially stable for any switching signal $\sigma$ satisfying (\ref{dwell_definition}) with $\tau_d > \tau_d^\star$, for $\tau_d^\star$ as in the statement of the lemma.    
    \end{proof}
    \end{lemma}

   \begin{proof}[Proof of Theorem \ref{main_theorem}]
        Now we are able to complete the proof of Theorem \ref{main_theorem}, and hence, conclude stability of the original closed-loop system (\ref{system}) with (\ref{U1}). Combining (\ref{ut_bound}), (\ref{wt_bound}), with (\ref{trans_stability_result1}), we get (\ref{tra}) where    
        \begin{align}\label{rhoxi}
            \rho &= \sqrt{2\mu\nu_1\nu_2}, \quad 
            \xi  = \frac{\beta- \frac{\ln \mu}{\tau_d} + \frac{\alpha}{\bar{\tau}_d} }{2}. 
        \end{align}
    \end{proof}

  \subsection{Technical Differences with the Stability Strategy in \cite{my_ECC}, \cite{my_TDS}}
    {The stability analysis strategy here employs multiple Lyapunov functionals, thus relaxing the assumption of availability of a common Lyapunov function for the nominal, delay-free, closed-loop systems from \cite{my_ECC}, \cite{my_TDS}. In detail, our {stability proofs are new} and rely on different {backstepping transformations}, than those introduced in \cite{my_ECC}, \cite{my_TDS}. In {Lemma~\ref{lemma_u(theta)}}, the bounds of the estimates regarding norm equivalency are improved compared with the bounds derived in \cite{my_ECC}, \cite{my_TDS}. In {Lemma~\ref{lemma_w(t)}}, the analysis leads to a new $\lambda$ function that explicitly quantifies its dependence on the dwell time (\ref{error lambda})--(\ref{error delta2}), and thus, explicitly quantifies the improvement in, e.g., the decay rate (via the $\alpha$ term in (\ref{trans_stability_result1}) defined in (\ref{alpa}), (\ref{all alpha})).  
    For example, from equations (\ref{error lambda})--(\ref{error delta2}), it follows that the use of dwell time knowledge (resulting in positive $\tau(t)$) directly reduces \(\lambda\), and hence, the upper bound of the prediction mismatch (with respect to the inapplicable, exact predictor-feedback law) presented in (\ref{Wt_bound}) also reduces. This, in turn, results in an improved decay rate of the closed-loop solutions, which can be seen by, for example, observing that $\alpha$ in the exponent of estimate (\ref{trans_stability_result1}) is positive when a minimum dwell time information is employed (whereas $\alpha \equiv 0$ when no dwell time information is used as in \cite{my_ECC}, \cite{my_TDS}). Moreover, Lemma 6 (stability of target system) is now proved by means of a construction of multiple Lyapunov functionals (in contrast with the approaches of \cite{my_ECC}, \cite{my_TDS}, which employ a common Lyapunov functional).  }

\subsection{Proof of Theorem \ref{main_theorem2}}
    We only sketch the proof as it follows similar steps as the ones we employ in the proof of Theorem \ref{main_theorem}. We employ a proof strategy analogous to that of Theorem~\ref{main_theorem}, utilizing the same backstepping transformation, its inverse, and the respective norm equivalency exactly as in Lemmas~\ref{lemma exact predic}--\ref{lemma_u(theta)}. {Next, following the same arguments as in  Lemma~\ref{lemma_w(t)}, we establish a bound for $|W(t)|$ analogous to (\ref{Wt_bound}) as     
    \begin{equation}\label{Wt_bound22}
        | W(t) | \leq \hat{\lambda}(\bar{\epsilon},\tau(t)) \left(  |X(t)| +  \int_{t-D}^{t}{|U(\theta)|d\theta} \right), \quad t \geq 0,
    \end{equation}
    where $\hat{\lambda} : \mathbb{R}_{+} \times [0,\tau_d] \to \mathbb{R}_{+}$ and for each fixed \(\tau \), the map \(\bar{\epsilon} \mapsto \hat{\lambda}(\bar{\epsilon},\tau)\) is a class \(K_{\infty}\) function of \(\bar{\epsilon}\) defined in (\ref{eps2}) and         
    \begin{equation}\label{error lambdaHAT}
            \hat{\lambda}\left(\bar{\epsilon},\tau(t)\right)= \bar{\epsilon}   e^{(M_A + \bar{\epsilon})D} \max \left\{\hat\delta_1(\tau(t)),\hat\delta_2(\bar{\epsilon},\tau(t))\right\},
        \end{equation}
        with
        \begin{align}
            \hat{\delta}_1(\tau(t)) &= \max\left\{1, M_B\right\}  \left[M_K \bigl(D-\tau(t)\bigr) + 1   \right],  \label{error delta1HAT} \\
            \hat{\delta}_2(\bar{\epsilon},\tau(t)) &=   \left[2M_K M_B \bigl(D-\tau(t)\bigr) + \bar{\epsilon} + M_K + M_B  \right], \label{error delta2HAT}
        \end{align}
        for $M_A, \, M_B, \, M_K$ defined in (\ref{M_A}).}
        The main difference is that we now define $\bar\epsilon$ as in (\ref{eps2}) capturing the maximum pairwise distance among all system matrices $(A_i,B_i)$ and all controller gains $K_i$. This is required to make small $\hat{\lambda}$, as this now could be upper bounded by terms that depend on $\bar{\epsilon}$ (instead of $\epsilon$). This difference originates in that controller (\ref{U2}) incorporates the averaged sum of predictors $\hat{P}_{i}(t)$ instead of a single predictor that employs the expected matrix pair $(\bar{A},\bar{B})$. Once these norm bounds and mismatch estimates are derived, we repeat the multiple Lyapunov functional analysis from Lemma~\ref{trans_stability}. By requiring $\bar{\epsilon}$ to be sufficiently small and the minimum dwell time $\tau_d$ sufficiently large, we ensure exponential stability of all target subsystems and, consequently, of the overall switched target system. 
        
        {In particular, we estimate $\bar\epsilon^\star$ as
        \begin{align}\label{eps_condition_22}
        \bar{\epsilon}^\star &= \min \left\{ \lambda_1^{-1}\left(\frac{1}{ \sqrt{2 e^D D \nu_1}}\right), \right. \notag \\ 
        & \left. \quad \lambda_1^{-1}\left(\frac{\min \limits_{i=1,\ldots,p} \left\{ \frac{{\lambda}_{\min}(Q_i)}{|B_i {S_i}|} \right\}}{ \sqrt{2 e^D D (\nu_1+1})} \right)  \right\},
    \end{align}  
    where $\lambda_1^{-1}$ denotes the inverse function of $\lambda_1\left(\bar{\epsilon}\right)=\hat{\lambda}\left(\bar{\epsilon},0\right)$ and similarly for $\bar\tau_d^\star$ we get 
    \begin{equation}\label{dwell_conditionnHAT}
        \bar{\tau}_d^\star= \frac{\ln \mu}{\bar{\beta}},
    \end{equation} 
    where         
    \begin{align}
            \bar{\beta} &= \min \limits_{i=1,\ldots,p} \{\bar{\beta}_i\}, \label{bitaHAT}
    \end{align}
    with
    \begin{align}\label{alpha1HAT}
            \bar{\beta}_i &= \min \left \{ 1-2e^D \hat{\lambda}^2(\bar\epsilon,0) D \nu_1, \notag \right. \\ & \left. \quad \frac{  \frac{1}{2}{\lambda}_{\min}(Q_{i}) - 2 b_{i}e^D \hat{\lambda}^2(\bar{\epsilon},0)(D \nu_1 + 1) }{{\lambda}_{\max}({S_i})} \right \},
    \end{align}
    for $\nu_1$ given in (\ref{nuu1}) and $\mu$ given in (\ref{const_mu}).} 
    Finally, the invertibility of the backstepping transformation implies exponential stability in the original variables $(X,U)$. In particular, since the matrices $(A_j,B_j,K_j)$ play the same role as $(\bar{A},\bar{B},\bar{K})$ in Lemma~\ref{lemma_w(t)}, and thus, the corresponding function $\hat{\lambda}(\bar{\epsilon}, \tau(t))$ remains structurally similar to (\ref{error lambda}), { the exponential stability estimate (\ref{tra2}) holds with a decay rate $\bar\xi$ (similar to $\xi$ defined in (\ref{rhoxi}))     
    \begin{align}\label{rhoxiHAT}
           \bar{\xi}  &= \frac{\bar{\beta}- \frac{\ln \mu}{\tau_d} + \frac{\bar{\alpha}}{\bar{\tau}_d} }{2},
    \end{align}
    where 
        \begin{equation}\label{all alphaHat}
        \bar{\alpha} = \min \limits_{k\geq0} \{ {\bar{\alpha}_k}\},
    \end{equation} with
} 
{
    \begin{equation}\label{alpaHAT}
            \bar{\alpha}_k =  \min \limits_{i=1,\ldots,p} \left\{ \int_{t_k}^{t_k+\tau_d} {\left( \bar{a}_{i,k}(u)- \bar{\beta} \right) du} \right\}, \ k \in \mathbb{Z}_{\geq0},
    \end{equation}
    and 
    \begin{align}\label{alphaHAT}
        \bar{a}_{i,k}(t) &= \min \left \{ 1-2e^D \hat{\lambda}^2(\bar{\epsilon},\tau(t)) D \nu_1, \right. \notag \\ 
        & \left. \quad \frac{  \frac{1}{2}{\lambda}_{\min}(Q_{i}) - 2 b_{i}e^D \hat{\lambda}^2(\bar{\epsilon},\tau(t))(D \nu_1 + 1) }{{\lambda}_{\max}({S_i})} \right \}.
    \end{align}
    Similarly, as Lemmas~\ref{lemma exact predic}--\ref{lemma_u(theta)} are identical, one gets the same overshoot coefficient $\rho$ given in (\ref{rhoxi}). }

\section{Simulation Results}\label{sec4}

\subsection{Responses Under (\ref{U1}) and (\ref{U2})}
Consider the switched system (\ref{system}) with the unstable subsystems \( A_1 \), \( A_2 \), and \( A_3 \)  as
\begin{equation}\label{ex1}
    A_1 = \begin{bmatrix}
    1 & 1 \\
    1 & 2
\end{bmatrix}, \quad
A_2 = \begin{bmatrix}
    0.97 & 1.15 \\
    1.06 & 2.09
\end{bmatrix},  \quad
A_3 = \begin{bmatrix}
    1.08 & 1.2 \\
    1.14 & 2.13
\end{bmatrix},
\end{equation}
and input matrices \( B_1 \), \( B_2 \), and \( B_3 \) as
\begin{equation}\label{ex1_dyn}
    B_1 = \begin{bmatrix}
    0 \\
    1
\end{bmatrix}, \quad
B_2 = \begin{bmatrix}
    0 \\
    1.05
\end{bmatrix}.
\quad
B_3 = \begin{bmatrix}
    0 \\
    1.1
\end{bmatrix}.
\end{equation}
The purpose of our simulations is to validate the theoretical stability results and compare the performance of controllers (\ref{U1}) and (\ref{U2}). {All simulations are  carried out in {MATLAB R2023b} using the \texttt{CVX} toolbox for the convex optimization 
problems and standard built-in functions\footnote{{The code used is provided here \\ {https://github.com/KatsanikJr/stabilization-long-input-delay}.}}.} Hence, for both controllers (\ref{U1}), (\ref{U2}) we choose {by pole placement} 
\begin{align}\label{ex1_k} 
K_1 &= \begin{bmatrix} -13 & -8 \end{bmatrix}, \notag \\ \quad K_2 &= \begin{bmatrix} -10.7742 & -7.6762 \end{bmatrix},  \notag \\ K_3 &= \begin{bmatrix} -10.5564 & -7.4636 \end{bmatrix},
\end{align}
which places the closed-loop poles at $-2,-3$ for each $i=1,2,3$.  { Then, for each subsystem $i \in \{1,2,3\}$, we choose randomly, symmetric positive--definite matrices $Q_i$ and compute the corresponding Lyapunov matrices $S_i$ from (\ref{mean_delayf_stabilityy}) (see also Chapter~4 of \cite{Antsaklis_Michel}). In particular,  we use}
\begin{equation}
    Q_1 = \begin{bmatrix} 1 & 0 \\ 0 & 1 \end{bmatrix}, 
    \qquad
    Q_2 = \begin{bmatrix} 3 & 0 \\ 0 & 3 \end{bmatrix}, 
    \qquad
    Q_3 = \begin{bmatrix} 2 & 0 \\ 0 & 2 \end{bmatrix},
\end{equation}
{and obtain the corresponding matrices $S_i > 0$ as the unique solutions of the Lyapunov equations (\ref{mean_delayf_stabilityy}) as }
\begin{align}
S_1 &= \begin{bmatrix} 3.1 & 0.3 \\ 0.3 & 0.1333 \end{bmatrix}, 
\qquad
S_2 = \begin{bmatrix} 9.168 & 1.0113 \\ 1.0113 & 0.4255 \end{bmatrix}, 
\qquad \notag \\ 
S_3 &= \begin{bmatrix} 6.0907 & 0.6302 \\ 0.6302 & 0.2742 \end{bmatrix}.
\end{align}
Regarding controller (\ref{U1}), we choose the average matrices with the convex optimization routine (\ref{opt_routine}) with respect to the $|\cdot|_2$-norm. {The solution leads to the following optimal average matrices}
\begin{align}\label{exavg}
    \bar{A} &= \begin{bmatrix}
    1.0280 & 1.1077 \\
    1.0837 & 2.0562
\end{bmatrix}, \quad
\bar{B} = \begin{bmatrix}
    0 & 1.05
\end{bmatrix}, \notag \\
\bar{K} &= \begin{bmatrix}
    -11.7782 & -7.7318 
\end{bmatrix}.
\end{align}
\begin{figure}[ht!]
    \centering
    \includegraphics[width=8.8 cm]{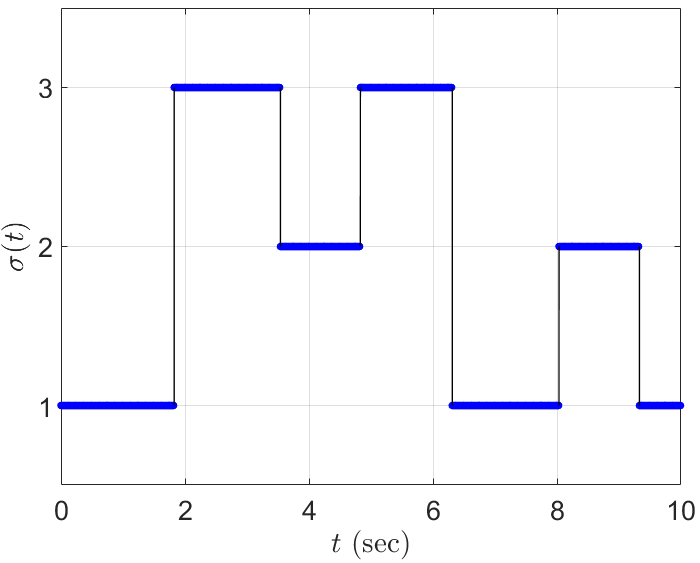}
    \caption{Evolution of switching signal $\sigma(t)$ for all the case studies.
    \label{fig4}}
\end{figure}
\begin{figure}[ht!]
    \centering
    \includegraphics[width=8.8 cm]{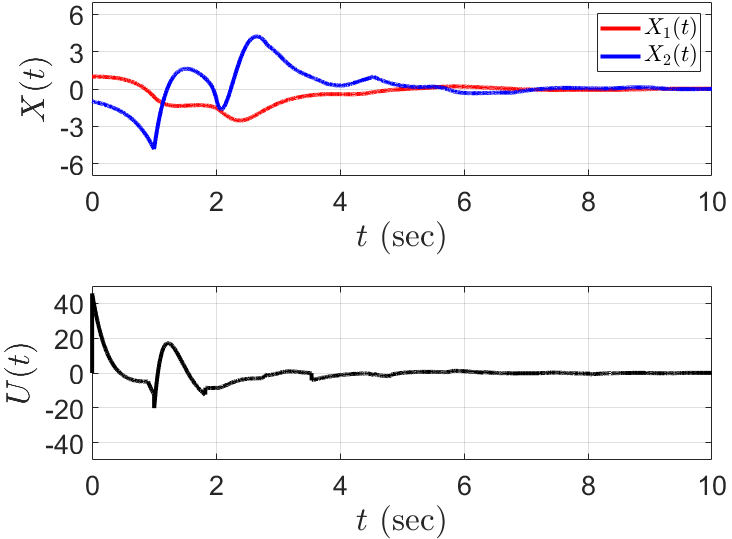}
    \caption{Evolution of state $X(t)$ and control input $U(t)=U_1(t)$ for system (\ref{system}) with (\ref{ex1}), (\ref{ex1_dyn}), under controller (\ref{U1}) with (\ref{exavg}). 
    \label{fig5}}
    \end{figure}    
\begin{figure}[ht!]
    \centering
    \includegraphics[width=8.8 cm]{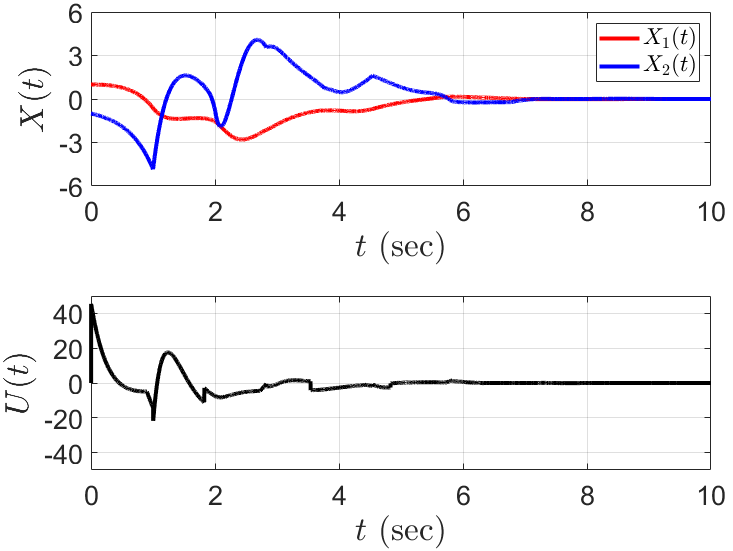}
    \caption{Evolution of state $X(t)$ and control input $U(t)=U_2(t)$ for system (\ref{system}) with (\ref{ex1}), (\ref{ex1_dyn}), under controller (\ref{U2}) with (\ref{ex1_k}). 
    \label{fig6}}
    \end{figure}
We set $D=1$, $\tau_d=0.9$, $\bar{\tau}_d=3$, and initial conditions are $X_0 = \begin{bmatrix} 1 & -1 \end{bmatrix}^{T}$, $U(s)=0$, for $s \in [-D,0)$. Figure~\ref{fig4} shows the evolution of the switching signal over the simulation horizon. Active mode is indicated by a blue line and a black line indicates the switching instants. The system's responses under (\ref{U1}) and (\ref{U2})  are depicted in Figure~\ref{fig5} and Figure~\ref{fig6}, respectively. We observe that the respective, closed-loop performances are similar. {In this example, $\epsilon^\star=\bar{\epsilon}^\star=1.9687\times10^{-11}$, $\tau_d^\star=\bar{\tau}_d^\star=21.002$, whereas $\epsilon=1.2509$ (and $\bar{\epsilon}=2.5018$), which is larger than the theoretical upper bound $\epsilon^\star$. This shows that the designs developed achieve stabilization despite $\epsilon$ (and $\bar{\epsilon}$) and $\tau_d$ exceeding their theoretical bounds, which confirms that the actual $\epsilon^\star$ (resp. $\tau_d^\star$) may be larger (resp. smaller) than the estimated, theoretical values we provide. This is attributed to the conservatism of our stability analysis strategy, for example, the choice of Lyapunov functional (\ref{LyapunovFnc}).}
We also note that closed-loop performance under controller~(\ref{U1}) depends on the specific choice of the optimization routine (\ref{opt_routine}) or, more generally, the selection of the expected matrices ($\bar{A}$, $\bar{B}$, $\bar{K}$) employed in the predictor-based control design. In practice, different choices for expected matrices, such as, e.g., employing an element-wise mean, may lead to different system performance. 
\begin{figure}[ht!]
    \centering
    \includegraphics[width=8.8 cm]{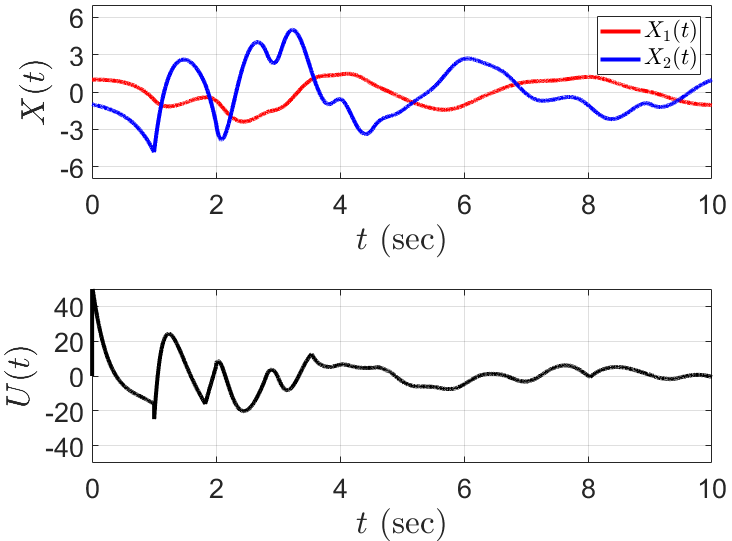}
    \caption{Evolution of state $X(t)$ and control input $U(t)$ for system (\ref{system}) with (\ref{ex1}), (\ref{ex1_dyn}), under controller (\ref{U1}) with (\ref{exavg}), when $\tau(t)=0$ for all $t$. 
    \label{fig7}}
    \end{figure}  
\begin{figure}[ht!]
    \centering
    \includegraphics[width=8.8 cm]{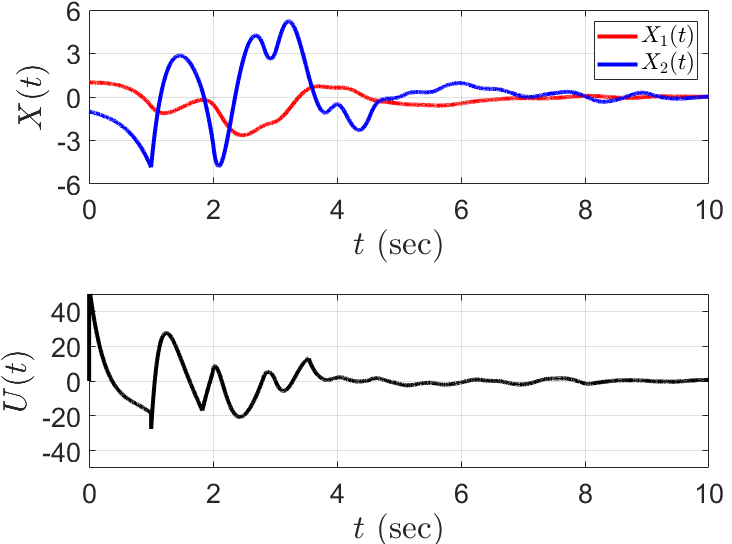}
    \caption{Evolution of state $X(t)$ and control input $U(t)$ for system (\ref{system}) with (\ref{ex1}), (\ref{ex1_dyn}), under controller (\ref{U2}) with (\ref{ex1_k}), when $\tau(t)=0$ for all $t$. 
    \label{fig8}}
    \end{figure}
    \subsection{Performance Comparison without Dwell Time Knowledge and with Exact Predictor Feedback}
    Next, we compare the closed-loop system's behavior, when the controllers are lacking any knowledge regarding the switching signal, {i.e., $\tau(t) = 0, \, \forall t \geq 0,$ which corresponds to our previous predictor-based designs presented in \cite{my_ECC}, \cite{my_TDS}, where no dwell time information is utilized.} As illustrated further in Figures~\ref{fig7} and ~\ref{fig8}, for unknown switching, the performance deteriorates. This is reasonably expected as with no information about the switching signal available, the average predictor-feedback laws employ a less accurate prediction of the state. It should be noted that, in this case, a predictor-based controller, which assumes that the system operates always in a single mode, may fail to even stabilize the system. This observation further reinforces the necessity of employing an average, predictor-based controller.  
    
    {
    Furthermore, we quantify the performance improvement of the designs (\ref{U1}), (\ref{U2}) as compared to the ones presented in \cite{my_ECC}, \cite{my_TDS}, and also compare the performance to the inapplicable, exact predictor-feedback law 
        \begin{equation} \label{Uexact}
            U_{\rm exact}(t)=K_{\sigma(t+D)}P(t),
        \end{equation} where $P$ is given in (\ref{P1(t)}). Specifically, we include the performance index
        \begin{equation}\label{perf_index}
                      J = \int_{0}^{T} \left( | X(t) |^{2} +  | U(t)|^{2} \right) dt,
        \end{equation}
        which quantifies the tracking error and control effort over the simulation horizon. In particular, Table \ref{perf_indexTable} presents the resulting performance indices under both designs (\ref{U1}), (\ref{U2}) for the cases in which there is or not a minimum dwell time available and under (\ref{Uexact}). We observe that the performance improves and is closer to the one under the exact predictor-feedback law according to the performance index, which is also evident by inspection of Figures \ref{fig5}--\ref{fig8}, in which we can observe a slower convergence rate and a more oscillatory behavior in the cases in which no knowledge of the minimum dwell time is employed. This performance improvement can be explained by the fact that utilization of dwell time information enhances the accuracy of construction of the exact future state since an exact prediction is obtained over the known prediction horizon sub-interval $[t, \tau_0(t)+\tau_d]$ (as compared to employing only average estimates over the whole prediction horizon interval $[t, t +D]$). Furthermore, the respective closed-loop response under (\ref{Uexact}) is shown in Figure~\ref{fig:exact_performance}, using the same switching signal and simulation setup. }
        
        \subsection{Responses with Delay Mismatch}
        {
        Additionally, we simulate the closed-loop response under (\ref{U1}) and (\ref{U2}) for perturbations of $\pm 5\%$ around the nominal delay $D$. Results in Figure \ref{perturbations_figs} confirm that the system remains exponentially stable demonstrating that the predictor-based law maintains its stability properties to small delay uncertainty. Based on our simulations, the maximum delay value perturbation that leads to instability or to very poor performance is about $\pm 7\%$. }

        \subsection{Effect of the Number of Modes on Stability Conditions and Complexity of Control Computation}
        {Although performing numerical tests with a larger number of modes may be computationally demanding, the following provide a detailed discussion on the effect of the number of subsystems on the stability conditions and controller design. Adding more modes generally tightens the stability margin $\epsilon^\star$, i.e., the maximum allowable distance among system matrices and among control gains for which exponential stability is guaranteed. To see this, from equations (\ref{error lambda})--(\ref{error delta2}), we observe that the bound $\lambda$ grows with $\bar{M}_A, \, \bar{M}_B, \, M_K,$ whose values increase with the increase of the number of modes (as they represent the maximum over the norms of the matrices $A_i, \, B_i,\, K_i$, respectively, over all $i$). In turn, this implies that, the  parameter $\epsilon^{\star}$ (see (\ref{eps_condition_2})) decreases with the increase of $\bar{M}_A, \, \bar{M}_B, \, M_K$. Similar reasoning applies due to the dependency of $\epsilon^\star$ on $\nu_1$, $\nu_2$ (whose values increase with $M_A, \, M_B, \, M_K$ according to (\ref{nuu1}), (\ref{nuu2})). }
        
        {
        Furthermore, in the optimization problem (\ref{opt_routine}), (\ref{opt_routine2}), a larger number of modes increases the number of constraints used to compute $(\bar{A}, \bar{B}, \bar{K})$, which increases linearly their computational complexity, while not affecting though the structure of the algorithm. A larger number of modes implies that one has to compute a larger number of control gains $K_i$, which may increase computational effort linearly with the number of modes. However, since the gains $K_i$ are computed once, {a priori}, the {online controller implementation} for the case of $U(t)=U_1(t)$ remains simple. On the other hand, for the case of controller $U(t)=U_2(t)$, since it additionally involves real-time computation of a larger set of (exact) predictor states, the online computational load may increase as the number of modes grows. }

\begin{table}[ht]
    \centering
    \caption{Cost $J$ as in (\ref{perf_index}) under each controller.}
    \label{perf_indexTable}
    \begin{tabular}{p{6cm} c}
        \toprule
        \textbf{Controller $U$} & \textbf{Cost $J$} \\
        \midrule
        $U_1$ feedback law as in (4)  
            & $3.68 \times 10^{2}$ \\[3pt]

        $U_1$ feedback law with $\tau(t) = 0,\; \forall t \ge 0$ (no dwell time knowledge)  
            & $8.34 \times 10^{2}$ \\[3pt]

        $U_2$ feedback law as in (11)   
            & $3.83 \times 10^{2}$ \\[3pt]

        $U_2$ feedback law with $\tau(t) = 0,\; \forall t \ge 0$ (no dwell time knowledge)  
            & $7.81 \times 10^{2}$ \\[3pt]

        $U_{\mathrm{exact}}$ feedback law as in (\ref{Uexact}) (with exact predictor)  
            & $2.32 \times 10^{2}$ \\
        \bottomrule
    \end{tabular}
\end{table}

        \begin{figure}[ht!]
            \centering
            \includegraphics[width=8.8 cm]{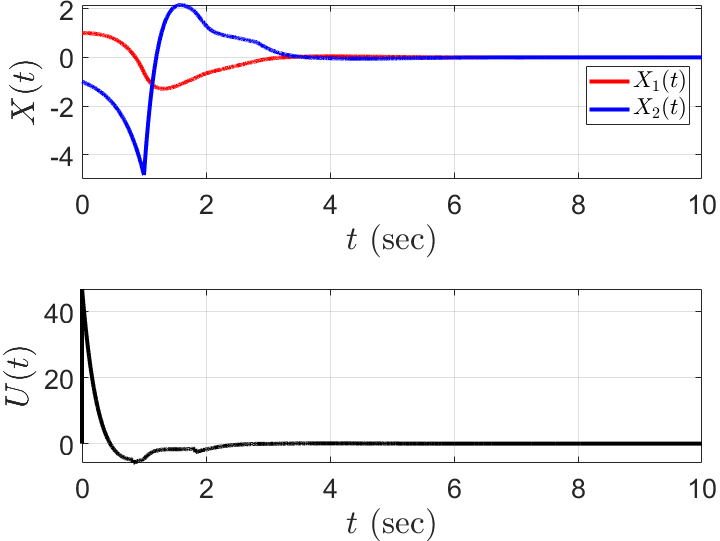}
            \caption{Closed-loop response under the inapplicable, exact predictor-feedback law $U_{\rm exact}$ in (\ref{Uexact}) with (\ref{ex1_k}).}
            \label{fig:exact_performance}
        \end{figure}

        \begin{figure}[ht!]
            \centering
            \begin{subfigure}{0.49\linewidth}
                \centering
                \includegraphics[width=\linewidth]{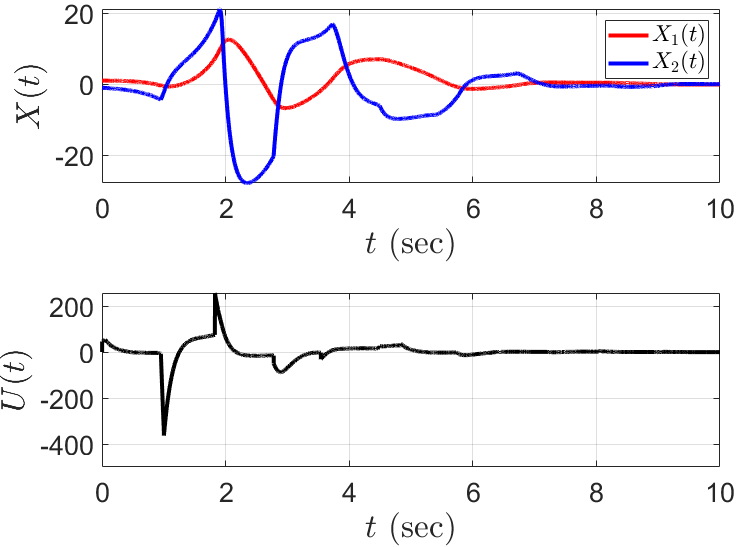} 
                \caption{$U_1$: actual $D=0.95$}
                \label{fig:robust_U1_m5}
            \end{subfigure}\hfill          
            \begin{subfigure}{0.49\linewidth}
                \centering
                \includegraphics[width=\linewidth]{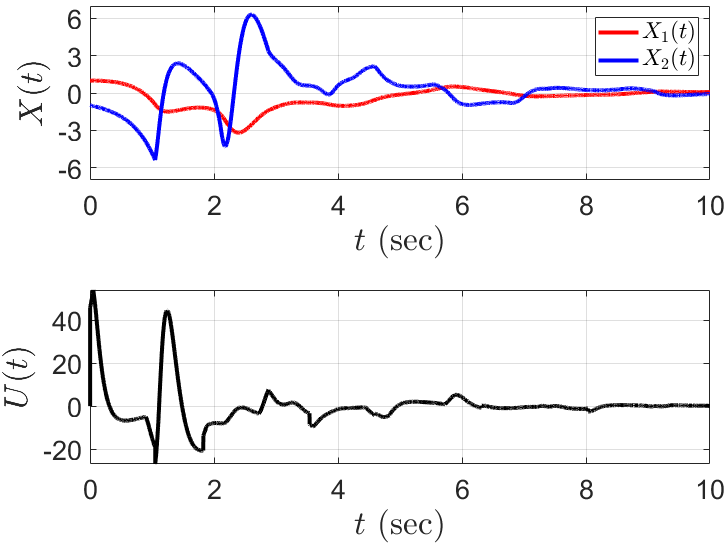} 
                \caption{$U_1$:  actual $D=1.05$}
            \label{fig:robust_U1_p5}
            \end{subfigure}
            \vspace{0.75em}
            \begin{subfigure}{0.49\linewidth}
                \centering
                \includegraphics[width=\linewidth]{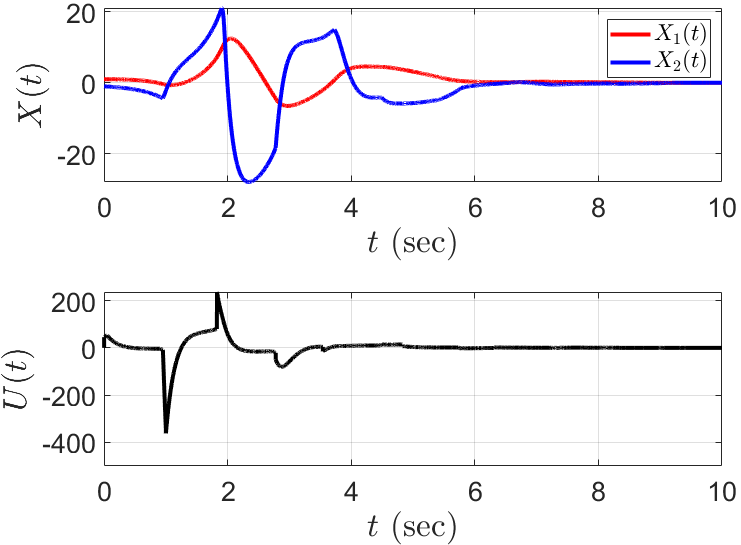} 
                \caption{$U_2$: actual $D=0.95$}
                \label{fig:robust_U2_m5}
            \end{subfigure}\hfill            
            \begin{subfigure}{0.49\linewidth}
                \centering
                \includegraphics[width=\linewidth]{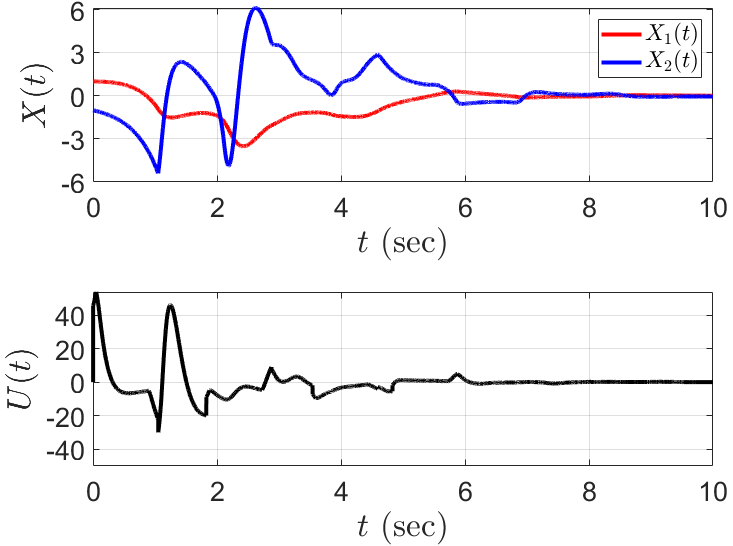} 
                \caption{$U_2$: actual $D=1.05$}
                \label{fig:robust_U2_p5}
            \end{subfigure}
            \vspace{0.75em}
        \caption{Robustness to delay mismatch. Closed-loop responses with $\pm 5\%$ perturbation of the nominal delay $D$, under $U_1$ and $U_2$.}
        \label{perturbations_figs}
    \end{figure}

\section{Conclusions and Future Work}\label{sec5} 
In this work, we developed two different, delay compensating control designs for switched linear systems with long input delay, where the switching signal is time-dependent, featuring a minimum/maximum dwell time. The first control design employs an average predictor-feedback law that properly exploits minimum dwell-time information, allowing exact short-term prediction of the state. The second control law is based on averaging exact predictor-feedback laws for each subsystem. Uniform exponential stability, for the respective closed-loop systems, is proved via introduction of backstepping transformations and construction of multiple Lyapunov functionals. Numerical simulations confirm the effectiveness of the proposed control laws, also providing comparison with cases when the controllers lack any knowledge about the switching signal. 

Summarizing our results we can conclude that, while controller (\ref{U2}) offers greater flexibility in terms of tuning—since each subsystem's (exact) predictor feedback law can be independently adjusted—controller (\ref{U1}) is preferable for theoretical analysis, as it corresponds to the exact predictor-feedback law for an expected system. Such a correspondence could facilitate a potential stability analysis and guide the controller's tuning. With respect to state prediction accuracy, both controllers behave similarly overall, but such comparisons remain scenario-dependent and further theoretical/numerical comparisons are needed, for potentially exactly quantifying the state prediction error for each controller.

As future work, we aim at developing design approaches that exploit additional, available dwell time and other, switching signal information, as, e.g., in the case of stochastic switching (see, e.g., \cite{Kong_P1}), towards construction of accurate predictions of the state with larger allowable differences among system's matrices and among controller's gains. {We also aim at addressing systems featuring impulse effects (in addition to input delay) since such systems may appear in certain applications (depending on the dynamic models and respective states considered). Allowing the state to have discontinuities (i.e., considering an impulsive, rather than a switched, system) would require adoption of a different, hybrid systems framework, which would in turn require construction of a new predictor-based control law and introduction of a different Lyapunov-based stability analysis, within a hybrid systems setup.}

\bibliography{elsarticle}      
\bibliographystyle{elsarticle-num}

\end{document}